\newcommand{\scomp}{\ensuremath{\circ_s}}
\newcommand{\pcomp}{\ensuremath{\circ_p}}
\newcommand{\BigO}{\mathcal{O}}
\newtheorem{theorem}{Theorem}[section]
\newtheorem{lemma}[theorem]{Lemma}
\newtheorem{fact}[theorem]{Fact}
\newtheorem{definition}[theorem]{Definition}
\newtheorem{problem}[theorem]{Problem}
\begin{document}

\sloppy

\title{Task Allocation for Distributed Stream Processing\\(Extended Version)}
\date{}

\author{Raphael Eidenbenz \\ABB Corporate Research 
\\ raphael.eidenbenz@ch.abb.com 
\and
Thomas Locher \\ABB Corporate Research 
\\thomas.locher@ch.abb.com
}

\maketitle

\begin{abstract}

There is a growing demand for live, on-the-fly processing of increasingly large amounts of data.
In order to ensure the timely and reliable processing of streaming data, a variety of distributed stream processing architectures and platforms have been developed, which handle the fundamental tasks of (dynamically) assigning processing tasks to the currently available physical resources and routing streaming data between these resources.
However, while there are plenty of platforms offering such functionality, the theory behind it is not well understood. In particular, it is unclear how to best allocate the processing tasks to the given resources.

In this paper, we establish a theoretical foundation by formally defining a task allocation problem for distributed stream processing, which we prove to be NP-hard. Furthermore, we propose an approximation algorithm for the class of
series-parallel decomposable graphs,
which captures a broad range of common stream processing applications. The algorithm achieves a constant-factor approximation under the assumptions that the number of resources scales at least logarithmically with the number of computational tasks and the computational cost of the tasks dominates the cost of communication.

\end{abstract}






\section{Introduction}
\label{sec:introduction}

The stream processing paradigm, where data streams are processed by applying a series of functions to the elements in the data streams, is gaining in importance due to the large variety of supported applications.
An early adopter of stream processing, and the related complex event processing paradigm, was the financial service industry, where it is used, e.g., 
to rapidly detect relevant events in high-frequency stock trading. Stream processing is also applied to digital control systems in order to continuously supervise and record the system state, and in network security to monitor network traffic.
Another use case is customer experience management, where, for example, click stream data is analyzed on-the-fly to measure customer behavior.
As the amount of data to process grows and the steady, uninterrupted processing of data becomes more and more critical,
scalability and fault-tolerance are becoming key requirements. Distributed stream processing platforms address these issues by spreading the workload across an extensible network of machines and by dynamically redistributing tasks in the event of machine or network failures.
Stream processing jobs typically exhibit the properties that individual data items can enter the various stages of processing independently and simultaneously and that the number of arithmetic operations per data transfer is high, i.e., the computational complexity dominates the communication cost.

The rising popularity of distributed stream processing has led to the development of many systems (e.g., \cite{akidau13,arasu03,chandrasekaran03,gedik08,neumeyer10}) that offer simple programming interfaces while abstracting away the underlying complexity of distributing tasks, routing streams, and handling failures, akin to the MapReduce framework~\cite{dean04} for batch processing. 
The basic principle behind these platforms is the notion of a processing element (PE) that consumes one or more data streams, processes the received data, and continuously outputs processing results again in the form of data streams. 
These PEs are interconnected, forming a \emph{streaming topology} where PEs without incoming streams receive external data streams for processing, and the set of PEs without outgoing streams constitutes the final stage of processing after which the results are typically stored or displayed.
While a PE technically encapsulates a specific computational task, we treat the terms \emph{PE} and \emph{task} as synonymous.
An essential function of a streaming platform is to allocate the PEs to the available physical resources. 
In the following, we will refer to this function as \emph{task allocation}. In most proposed systems, simple schemes, such as round-robin, and heuristics are used to allocate tasks. The lack of a formal specification of the problem negates the possibility to optimize the distribution of PEs for an optimal utilization of the resources, which would result in a minimized processing delay.

In this paper, we focus on this task allocation problem for distributed stream processing and propose a basis for a formal treatment, enabling the rigorous analysis of allocation strategies.
Furthemore, we introduce a novel class of series-parallel-type graphs that we use to model streaming topologies: This model is based on the premise that many processing jobs either consist of a series of linearly dependent tasks, tasks that can be executed independently in parallel, or a combination thereof. We argue that this model is of particular interest as it adequately captures a large subset of practical stream processing applications.
The task allocation problem is shown to be NP-hard, even when the streaming topology is restricted to the aforementioned class of series-parallel graphs. The main contribution of the paper is an approximation algorithm that deterministically achieves an approximation ratio of $\BigO(n^{\BigO(1/c)})$, where $n$ denotes the number of tasks and $c$ is the number of physical resources---subject to the constraint that the communication cost is upper bounded asymptotically by the computational cost, which, as stated before, is often the case in practice.
Hence, the algorithm guarantees a constant-factor approximation to the cost of the best possible allocation provided that $c \in \Omega(\log n)$.

The paper is structured as follows. In the next section, the model and a few general complexity results are presented. The approximation algorithm is described and analyzed in Section~\ref{sec:algorithm}. 
Related work on task allocation and distributed stream processing is summarized in Section~\ref{sec:relatedwork}. Finally, Section~\ref{sec:conclusion} concludes the paper.


\section{Model}
\label{sec:model}

In this section, we introduce the task allocation problem and the class of {series-parallel-decomposable} graphs, and present some general complexity bounds.

\subsection{Task Allocation}
We model the streaming topology as a directed acyclic \emph{streaming graph} $G = (V,E)$, where $|V| =: n$.
Let $V_s \subseteq V$ denote \emph{source nodes}, i.e., the set of nodes without incoming edges, and let $V_t \subseteq V$ be the \emph{sink nodes}, i.e., the set of nodes without outgoing edges.
Graph $G$ represents the stream processing topology in that each node $v\in V$ corresponds to a processing task, where a \emph{task} is an atomic unit of computation to be executed on a single machine.
A weight function $w: V \rightarrow \mathbb{R}^+$ determines the computional complexity of the tasks, e.g., reflecting the time it takes to process an item in the data stream.
This definition assumes that the task complexity is independent of the particular data item to process, i.e., the computational effort is equal for all items in the data stream. While this is a simplification, the variance is small for a wide range of tasks in practice, e.g., processing a steady stream of measurements over a fixed size window, and can often be neglected.

The tasks must be allocated to $c$ processing resources $\mathcal{R} := \{R_1,\ldots,R_c\}$, which is formally captured by a function $r:V \rightarrow \mathcal{R}$ that defines the allocation of tasks to resources.
The number of tasks mapped to resource $R$ is defined as $n(R) := |\{v \in V \,|\, r(v) = R\}|$. This definition implies that $\sum_{i=1}^c n(R_i) = n$.
The more tasks are executed on the same resource $R$, the more it must divide its processing capacity. We define the \emph{processing cost} of a task $v$ as
\begin{displaymath}
d(v) := w(v)\cdot n(r(v)),
\end{displaymath}
i.e., the cost grows linearly with the weight and the number of tasks mapped to the same resource. 
Such a linear cost model reflects the concurrency model where each allocated task gets an equal share of the resource, which entails that the individual processing times grow linearly with the number of collocated tasks. Note that we refrain from introducing a scaling parameter to the cost of collocating tasks but use the number $n(r(v))$ of collocated tasks directly to compute the processing cost of a task, i.e., we assume that weights are scaled appropriately.
The advantage of collocating tasks on the same resource is that data streams between those tasks need not be routed over the network, an operation which incurs a certain cost.
The cost of transferring data along an edge $e = (u,v) \in E$ is given by its edge weight $b(e) \ge 0$, which must be paid only if $u$ and $v$ are not collocated, i.e., communication on the same resource is assumed to be free of cost. More formally, the \emph{transfer cost} $\ell(e)$ of $e=(u,v)$ is
\begin{displaymath}
 \ell(e) := \begin{cases} b(e) &\mbox{if } r(u) \neq r(v) \\ 
0 & \mbox{otherwise.}
\end{cases}
\end{displaymath}
Transfer costs can be used to model data rates, latencies, or a combination thereof.
The \emph{streaming cost} on a (directed) $v_s$-$v_t$-path $P = (v_s,v_1,\ldots,v_\ell,v_t)$, where $v_s \in V_s$ and $v_t \in V_t$, is defined as the sum of the processing cost of each node on the path, plus the transfer cost of each edge on the path.
\begin{displaymath}
d(P) := \sum_{v \in P} d(v) + \sum_{e\in P} \ell(e).
\end{displaymath}

Let $\mathcal{P}(G)$ denote the set of all paths in $G$ starting at a source node and terminating at a sink node.
The primary objective is to minimize the streaming cost of the entire graph $G$, which is defined as the streaming cost on the worst-case path, i.e.,
\begin{displaymath}
d(G) := \max_{P\in \mathcal{P}(G)} d(P).
\end{displaymath}

\begin{problem}[Task Allocation]\label{def:general_problem} 
Given a weighted directed acyclic graph $G$ and a set $\mathcal{R}$ of $c$ resources, find the mapping $r: V \rightarrow \mathcal{R}$ that minimizes $d(G)$.
\end{problem}

The relation between node and edge weights, and the resulting processing and transfer costs, has an immediate impact on the streaming cost.
As mentioned before, it is often reasonable to assume that processing costs dominate the transfer costs, 
which reflects scenarios where the resources are in physical proximity and connected by means of high-bandwidth, low-latency links. We formally define such computationally constrained problems in Definition~\ref{def:compConstrained}. Let $\hat{d}(G)$ be the streaming cost of $G$ when setting $\ell(e) := 0$ for all edges.

\begin{definition}\label{def:compConstrained}
A streaming graph is \emph{computationally constrained} if $d(G) \in \BigO(\hat{d}(G))$ for any  mapping $r: V \rightarrow \mathcal{R}$.
\end{definition}

In this paper, we primarily study such problem instances and discuss implications for the general case along the way.

\subsection{Series-Parallel-Decomposable Graphs}

Regarding streaming topologies, we focus our attention on directed \emph{series-parallel-decomposable (SPD) graphs}, which are graphs that can be constructed by a combination of serial and parallel composition steps. 

A \emph{serial composition} $G = G_1 \scomp G_2$ of two SPD graphs $G_1$ and $G_2$ is defined as follows: The resulting graph $G$ consists of both graphs $G_1$ and $G_2$ where each sink node of $G_1$ is connected to each source node of $G_2$. More formally, $G=(V,E)$ is given by
$V := V_1 \cup V_2$ and 
 $E :=  E_1 \cup E_2 \cup \{(u,v) ~|~ u\in V_{t,1} \text{~and~} v\in V_{s,2}\}$,
where $V_{s,i}$ and $V_{t,i}$ are the source and sink nodes of $G_i$, respectively.
A \emph{parallel composition} $G = G_1 \pcomp G_2$ of two SPD graphs $G_1$ and $G_2$ is a mere union of the two graphs without adding any edges, i.e.,
$V:= V_1 \cup V_2$ and $ E :=  E_1 \cup E_2$.
Note that all source nodes remain sources, and all sink nodes remain sinks.
Formally, SPD graphs are defined as follows.
\begin{definition}[Series-Parallel-Decomposable]\label{def:spdgraphs} 
A directed graph $G=(V,E)$ is series-parallel-decomposable (SPD) if $|V| = 1$ or there are SPD graphs $G_1$ and $G_2$ such that
$G = G_1 \scomp G_2$ or $G = G_1 \pcomp G_2$.
\end{definition}

Since SPD graphs are defined recursively, they can be represented by a rooted series-parallel decomposition tree (\emph{SPD tree}): The leaves of the SPD tree correspond to the nodes of the SPD graph. Each internal node of the tree represents a serial or parallel composition, i.e., a subtree $T$ with root $r$ corresponds to the graph that results from the composition of the graphs corresponding to the subtrees rooted at $r$'s child nodes. All internal nodes are labeled with $s$ or $p$ to indicate the type of the composition, i.e., serial or parallel. Given an SPD tree, the corresponding SPD graph can be constructed in linear time by traversing the tree in post-order, i.e., from the leaves to the root. 
An example SPD graph is shown in Figure~\ref{fig:series-parallel}, and its SPD tree is depicted in Figure~\ref{fig:series-parallel-recursive}.

\begin{figure}[t]
\center
 \includegraphics[width=.9\columnwidth]{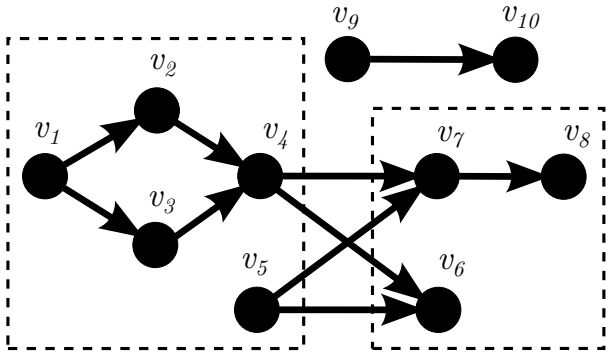}
 \caption{An example SPD graph consisting of 10 nodes.}
 \label{fig:series-parallel}
\end{figure}

In the remainder, we will often describe streaming graphs by its compositional structure and use the SPD tree representation for recursive algorithms.
It is convenient to identify a component $G_i$ of an SPD graph by the corresponding node in the SPD tree representation, i.e., the root of the sub-tree $T(G_i)$ corresponding to $G_i$. For ease of notation, we describe an internal node $z$ of an SPD tree by the type of composition and its child nodes, i.e., $z=(op,\{z_1,z_2\})$ where  $op \in \{s,p\}$ and $z_1, z_2$ are the child nodes. Moreover, we extend the SPD tree representation in that we allow nodes to have more than two children, which enables the concise representation of concatenations of serial or parallel compositions: a graph $G = G_1 \scomp G_2 \scomp \ldots \scomp G_k$ can be represented by an SPD tree $T(G)$ with a root node $z=(s,\{z_1, \ldots, z_k\})$ that has $k$ children $\{z_1,\ldots, z_k\}$, each of which is the root node of a subtree $T(G_i)$. Accordingly, $z=(p,\{z_1, \ldots, z_k\})$ represents $k$ parallel components. Finally, let $\mathcal{C}(z)$ denote the set of $z$'s  children. If $z$ is a leaf node, then $\mathcal{C}(z) = \emptyset$.

Note that the class of SPD graphs does not coincide with the class of series-parallel graphs as defined by Takamizawa et al.~\cite{takamizawa82}. While many graphs are both series-parallel and SPD, there are graphs that are only in one of the two classes.
For example, the graph with node set $V=\{v_1,v_2,v_3\}$ and edges $E=\{ (v_1,v_2), (v_2,v_3), (v_1,v_3)\}$ is series-parallel but not SPD. There are many SPD graphs that are non-planar, e.g., the class of SPD graphs contains all complete bipartite graphs $K_{m,n}$, which are non-planar if $m\ge n \ge 3$. In contrast, series-parallel graphs are planar by design.

\begin{figure}[t]
\center
\includegraphics[width=.9\columnwidth]{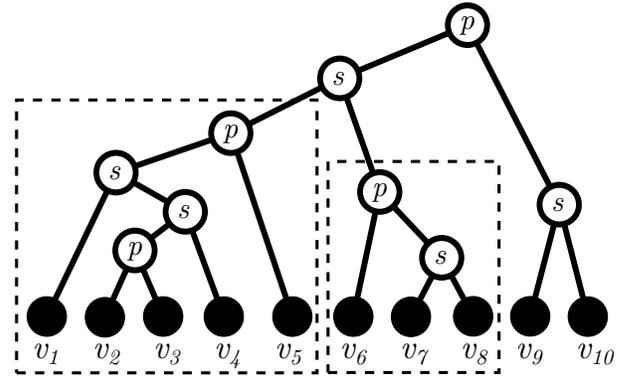}
 \caption{Tree representation $T(G)$ of the SPD graph~$G$ in Figure~\ref{fig:series-parallel}.}
 \label{fig:series-parallel-recursive}
\end{figure}

\subsection{General Bounds}\label{sec:general_bounds}

Allocating the tasks of a streaming graph to a set of resources in an optimal fashion is a hard problem in general.

\begin{theorem} The task allocation problem is NP-hard. \label{thm:NPhardGeneral}
\end{theorem}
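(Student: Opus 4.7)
I will reduce from the strongly NP-hard $3$-\textsc{Partition} problem: given $3m$ positive integers $a_1,\dots,a_{3m}$ with $\sum_i a_i = mB$ and each $a_i \in (B/4, B/2)$, decide whether they can be partitioned into $m$ triples each summing to $B$. Given such an instance, I construct a streaming graph $G$ on $c := m$ resources, fix a parameter $A > B$ and a threshold $T := A(A+B)$, and choose a transfer weight $M > T$. The graph $G$ is the parallel composition of $4m$ unit-weight chains: $3m$ \emph{item chains}, where the $i$-th chain has $a_i$ nodes, and $m$ \emph{anchor chains} of $A$ nodes each, with all intra-chain edges carrying weight $M$. The construction is clearly polynomial in the input size.

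The argument rests on two structural observations that pin down the form of an optimal allocation. First, since $M > T$, separating any two adjacent nodes of a chain already contributes $M > T$ to the cost of the corresponding path, so every chain in an optimal allocation sits on a single resource. Second, because $A > B$, any resource holding two or more anchor chains produces a chain path of cost at least $A \cdot 2A > A(A+B) = T$, so an optimal allocation places exactly one anchor chain on each resource. Under these constraints the streaming cost simplifies to $A \cdot \max_R L_R$, where $L_R = A + \sum_{i:\,\text{chain }i\text{ on }R} a_i$ is the total number of nodes on resource $R$ (the anchor chain length $A$ dominates because it exceeds every $a_i$).

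If the $3$-\textsc{Partition} instance is feasible, the canonical allocation---one anchor chain plus a triple of item chains summing to $B$ on each resource---gives $L_R = A + B$ uniformly and streaming cost exactly $T$. Conversely, if no valid $3$-partition exists, the gap property $a_i \in (B/4, B/2)$ combined with integrality forces some resource to carry item weight at least $B+1$: any resource holding four or more items has item sum strictly greater than $B$ and hence at least $B+1$, and any triple whose sum differs from $B$ must by pigeonhole be offset by some other resource that overshoots $B$ by at least one unit. Thus $\max_R L_R \ge A+B+1$ and the streaming cost exceeds $T$, so the minimum streaming cost is at most $T$ if and only if the $3$-\textsc{Partition} instance is feasible.

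The main obstacle is establishing the two structural observations rigorously and choosing $A$ and $M$ sufficiently large relative to $B$ to preclude non-canonical allocations---split chains, mixed anchor-and-item configurations, or imbalanced anchor placements---from undercutting the threshold. Once the case analysis pinning optimal allocations to the canonical form is in place, the reduction is polynomial and correctness follows from the feasibility equivalence above.
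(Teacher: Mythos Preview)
Your reduction is correct and takes a genuinely different route from the paper. The paper reduces from \textsc{SubsetSum} using a non-SPD gadget built from two parallel paths glued together by ``fork'' nodes with very heavy incident edges; the forks force certain triples to be collocated, and then a delicate counting argument over $n$ separate task-allocation instances (one per candidate subset size $k$) recovers the subset. Your construction is considerably more direct: a parallel composition of chains, with large edge weights forcing each chain onto a single resource and anchor chains forcing a one-anchor-per-resource layout, after which the problem collapses to balancing item-chain sums across resources. Because you reduce from the \emph{strongly} NP-hard $3$-\textsc{Partition}, the $a_i$ are polynomially bounded and the unary blow-up into chains stays polynomial---this is the key move. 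In fact your instance is an SPD graph with uniform node and edge weights, so your argument incidentally settles something the paper leaves open: its Theorem~\ref{thm:NPharduniform} reduces from (weak) \textsc{Partition} and therefore only yields hardness for SPD instances given in \emph{concise} format, explicitly noting that ``we leave the question of hardness for non-concise SPD graph representations open.'' Your reduction closes that gap. One small point worth tightening in the write-up: in the infeasible direction you should state explicitly that the bounds $B/4 < a_i < B/2$ force every group summing to $B$ to be a triple, so infeasibility of $3$-\textsc{Partition} really does rule out \emph{any} balanced $m$-way partition; your current phrasing gestures at this but mixes it with a pigeonhole remark that is slightly muddled.
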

\begin{proof}
This result follows from a polynomial reduction from the NP-complete  {\tt SUBSETSUM} problem, which, given a multiset $\mathcal{S} := \{s_1,\ldots,s_n\}$ of positive integers and an integer $x$, asks for a subset $\mathcal{S}^*$ of $\mathcal{S}$ such that the sum of the numbers in $\mathcal{S}^*$ equals $x$. 
We assume w.l.o.g.\ that $x\ge s_i$ for all $i$. Note that if $\mathcal{S}$ contained any $s_i > x$, it could be immediately ruled out from the solution.
Given an instance $(\mathcal{S},x)$ of {\tt SUBSETSUM}, for any $k \in \{1,\ldots n\}$ we construct a task allocation problem instance $I^k=(G^k,w^k,b^k,\mathcal{R}^k)$. The optimal mapping of $G^k$ to $n+k$ resources yields a subset $\mathcal{S}(I^k) \subseteq \mathcal{S}$ as described below. The claim is that if $(\mathcal{S},x)$ has a solution consisting of $k$ elements, then $\mathcal{S}(I^k)$ is such a solution. Hence, solving  all $I^k$ for $k \in \{1,\ldots n\}$ either reveals a solution---a candidate solution can be checked in polynomial time---or answers the subset sum problem in the negative. 

\begin{figure*}[t]
\center
 \includegraphics[width=.8\textwidth]{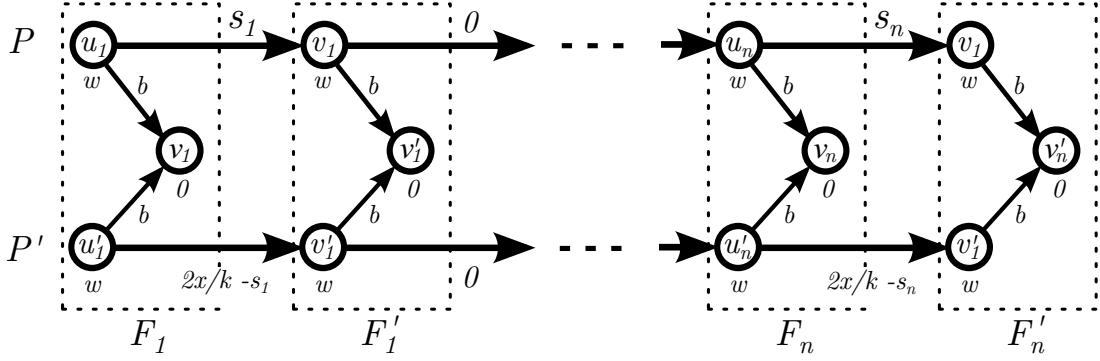}
 \caption{The task allocation instance constructed in the reduction from  {\tt SUBSETSUM}. Here, $b=12nw+x$.}
 \label{fig:np-proof}
\end{figure*}

An instance $I^k$ is constructed as follows (see Figure~\ref{fig:np-proof}):
$G^k$ consists of two paths $P:=u_1,v_1,u_2,v_2, \ldots, u_n, v_n$ and $P' :=u'_1,v'_1,u'_2,v'_2, \ldots, u'_n, v'_n$, and $2n$ additional nodes $\nu_1,\nu'_1, \ldots,\nu_n, \nu'_n$. Each node $\nu_i$ has incoming edges from $u_i$ and $u'_i$, and each node $\nu'_i$ has incoming edges from $v_i$ and $v'_i$. 
Edge weights $b^k$ are given by $b^k(u_i,v_i) := s_i$, $b^k(u'_i,v'_i) := (2x/k) - s_i$, $b^k(v_i,u_{i+1})=b^k(v'_i,u'_{i+1}) := 0$, and $b^k(e):=12nw+x$ for all remaining edges $e$.
Node weights $w^k$ are $0$ for all nodes $\nu_i$, $\nu'_i$, and $w:= \sum_{i=1}^n s_i$ for all other nodes.
Finally, $\mathcal{R}^k$ consists of $n+k$ resources.
Note that only $b^k$ and $\mathcal{R}^k$ depend on $k$, whereas $G^k$ and $w^k$ remain the same.
Given a solution $r$ of $I^k$, the {\tt SUBSETSUM}  solution candidate $\mathcal{S}(I^k)$ is defined as the set of all $s_i=b(u_i,v_i)$  for which $r(u_i)\neq r(v_i)$. 

It remains to show that if there is a {\tt SUBSETSUM} solution $\mathcal{S}^*$ with $k$ elements, then the optimal allocation $r^*$ for $I^k$ yields a correct solution, i.e.,  $\mathcal{S}(I^k)=\mathcal{S}^*$.
Let $F_i := \{u_i,u'_i,\nu_i\}$ and $F'_i := \{v_i,v'_i,\nu'_i\}$. Note that $V$ consists of $2n$ such node triplets, which we will refer to as \emph{forks}.
Assuming there is a solution with $k$ elements, then the claimed optimal allocation $r^*$ maps  forks $F_i$ and $F'_i$ to a separate resource each if $s_i \in \mathcal{S}^*$ and collocates all nodes in $F_i \cup F'_i$ otherwise; thus, $2k$ resources host one fork, $n-k$ resources host two forks.  The resulting streaming cost $d^*$ is given by the maximum over the two paths along $P$ and $P'$, both ending in $\nu'_n$, and amounts to
\begin{eqnarray*}
d^*(P) & = & 2k\cdot 3w + 2(n-k)\cdot 6w + \sum_{s_i \in \mathcal{S}^*} s_i , \\
d^*(P') & = & 2k\cdot 3w + 2(n-k)\cdot 6w + \sum_{s_i \in \mathcal{S}^*} \left( \frac{2x}{k} - s_i \right)  , \\
d^*(G^k) & = & \max \{d^*(P), d^*(P')\} = (12n-6k) w + x ,
\end{eqnarray*}
In the following, we show that $r^*$ is indeed optimal.
%
In an optimal allocation, all nodes of a fork must be collocated, otherwise the cost $(12nw+x)$ of an edge to $\nu_i$ or $\nu'_i$ has to be paid on one path ending either in $\nu_i$ or $\nu'_i$. The cost on that path exceeds $d^*$, contradicting optimality.
%
For a large enough $w$, a most even distribution of the forks onto resources is optimal since the processing cost of a set of collocated forks grows quadratically with the size of the set: the processing cost on path $P$ and $P'$ for $\kappa$ collocated forks is $\kappa^2\cdot 3w$. Hence, $2k$ resources must contain exactly one fork, and $n-k$ resources must contain exactly two forks in an optimal allocation. The chosen $w= \sum_{i=1}^n s_i$ is large enough so that any deviating allocation costs more than $r^*$.
Consequently, at most $n-k$ edges $(u_i,v_i)$ with weights $s_i$ can be \emph{covered}, i.e., $r(u_i)=r(v_i)$, and $2k$ edges $(u_i,v_i)$ on $P$ must be left \emph{uncovered} ($r(u_i) \neq r(v_i)$). Note that if edge $(u_i,v_i)$ is covered on $P$, then $(u'_i,v'_i)$ is covered on $P'$ as well. It is optimal to leave all $k$ edges with weight $0$ and a set $S$ of $k$ edges with $s_i$ weights uncovered on $P$, and a corresponding set $S'$ on $P'$.

If $\sum_{e \in S} b(e) > x$, then the streaming cost of $P$ exceeds $d^*$. 
On the other hand, if $\sum_{e \in S} b(e) < x$, then 
$$\sum_{e \in S'} b(e) = \sum_{e \in S'} \frac{2x}{k} - \sum_{e \in S} b(e)  = 2x - \sum_{e \in S} b(e) > x,$$
and $d(P') > d^*$. Hence, $S$ must be chosen so that $\sum_{e \in S} b(e)=x$. Allocation $r^*$ is optimal and $\mathcal{S}(I^k)=\mathcal{S}^*$.
\flushright\vspace{-3mm}\qedhere
\end{proof}

The proof of Theorem~\ref{thm:NPhardGeneral} shows NP-hardness with a streaming graph that is not quite an SPD graph due to the edges ending at nodes $\nu_i$.
Moreover, it achieves the reduction from the subset sum problem using a construction that relies on variable transfer and task weights.
Since many hard problems can be solved efficiently on series-parallel graphs~\cite{takamizawa82} an interesting question is whether optimal task allocation is efficiently solvable on SPD graphs. The following theorem answers this question in the negative, even for computationally constrained problems and even if all transfer weights and all task weights are constant. The claim follows from a reduction from the decision version of the  NP-complete  \emph{partition problem}. 

\begin{theorem} \label{thm:NPharduniform}
The task allocation problem on SPD graphs in concise format is NP-hard, even if transfer and task weights are uniform.
\end{theorem}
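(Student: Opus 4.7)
The plan is to reduce from the NP-complete \emph{partition} problem: given positive integers $s_1,\ldots,s_n$ summing to $2S$, decide whether some $I\subseteq\{1,\ldots,n\}$ satisfies $\sum_{i\in I}s_i=S$. Without loss of generality assume $s_i\le S$ for every $i$, so that $s_i\le M:=2S$ for all $i$.

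Given such an instance, the reduction constructs the SPD graph $G := C_M \pcomp C_M \pcomp C_{s_1} \pcomp C_{s_2} \pcomp \cdots \pcomp C_{s_n}$, where $C_\ell$ denotes a chain of $\ell$ nodes and the first two factors play the role of \emph{sentinel} chains. Each chain $C_\ell$ admits an SPD tree of size $\BigO(\log \ell)$ via iterated serial doubling, so the overall SPD tree has size $\BigO(n\log S)$, polynomial in the input length of the partition instance. I set uniform node weights $w:=1$, a uniform edge weight $b := M(M+S)+1$, and $c:=2$ resources, and take $K := M(M+S)$ as the decision threshold.

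For the forward direction, given a valid partition $I$, assigning one sentinel $C_M$ together with the chains $\{C_{s_i}\}_{i\in I}$ to $R_1$ and the other sentinel with $\{C_{s_i}\}_{i\notin I}$ to $R_2$ yields $n(R_1)=n(R_2)=M+S$; no chain is split, no transfer cost is paid, and every source-sink path lies entirely within a single chain of length at most $M$, so its streaming cost is at most $M(M+S)=K$.

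For the reverse direction I would argue in three steps. First, the choice $b=K+1>K$ forbids splitting any chain across resources, because even a single transfer on the path traversing that chain already exceeds $K$. Second, the two sentinels must lie on different resources, as otherwise one resource hosts at least $2M$ nodes and the corresponding sentinel path costs at least $2M^2 > M(M+S)=K$, using $M=2S$. Third, with sentinels separated and all chains collocated, setting $a:=\sum_{i\in I_1}s_i$ for the set $I_1$ of item chains on $R_1$, the sentinel-path cost equals $M\cdot\max(M+a,\,M+2S-a)$, which is at most $K$ exactly when $a=S$; hence $I_1$ witnesses a partition of $\{s_1,\ldots,s_n\}$. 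The most delicate point is the numerical calibration: both $2M^2>M(M+S)$ and the strict gap $M\cdot(M+S+1)>K$ for any integral imbalance $|a-S|\ge 1$ rely on $M>S$, which is precisely why the concrete choice $M:=2S$ is essential, while all remaining steps become routine once $b$ is large enough to rule out any beneficial splitting.
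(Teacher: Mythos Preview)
Your reduction is correct and proves the stated theorem, but it differs substantially from the paper's approach. The paper reduces from \textsc{Partition} as well, but builds $G:=G_1\pcomp\cdots\pcomp G_k$ where each $G_i$ is the complete bipartite graph $K_{s_i,s_i}$ (a serial composition of two sets of $s_i$ isolated nodes), takes \emph{unit} edge weight $\beta=1$, and asks whether $d(G)=n:=2\sum s_i$. The analysis there hinges on a delicate balance: with two resources, any path costs exactly $n$ when both endpoints lie on a resource of load $n/2$, and exceeds $n$ by exactly $1$ whenever an edge is cut or the loads are unequal; no sentinel is needed because each bipartite block already forces a pigeonhole argument.

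Your construction instead uses chains plus two sentinel chains of length $M=2S$, and sets the uniform edge weight to $b=K+1$ so that splitting any chain is immediately fatal. This makes the three-step backward argument clean and modular (no-split, sentinels separated, then a pure load calculation), at the price of two things. First, your SPD-tree size claim is phrased incorrectly: an SPD tree has one leaf per graph node, so $C_\ell$ cannot have an SPD tree of size $\BigO(\log\ell)$; what you mean is a concise \emph{encoding} (e.g., a DAG with shared subtrees via iterated doubling), which matches the paper's informal notion of ``concise format''. Second, and more substantively, the paper's argument works for \emph{any} positive constants $\alpha,\beta$---in particular for arbitrarily small $\beta$, which the paper exploits to conclude that NP-hardness holds even for computationally constrained instances. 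Your proof relies essentially on $b>K$, so it does not extend to the computationally constrained case without further work.
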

\begin{proof}
The claim follows from a reduction from the decision version of the  NP-complete  \emph{partition problem}, which, given a multiset $\mathcal{S} := \{s_1,\ldots,s_n\}$ of positive integers, asks for a partitioning of $\mathcal{S}$ into two subsets $\mathcal{S}_1$ and $\mathcal{S}_2$ such that the sum of the numbers in $\mathcal{S}_1$ is equal to the sum of the numbers in $\mathcal{S}_2$. Given an instance $\mathcal{S}=\{s_1,\ldots,s_k\}$ of the partitioning problem, we construct a task allocation problem instance as follows: for each $s_i$, we construct a streaming graph component $G_i$ consisting of a serial composition of two sets of $s_i$ parallel nodes, i.e., $G_i :=G_{i,l} \scomp G_{i,r}$ where $G_{i,l} =G_{i,r}=(V,E), |V|=s_i$ and $E=\emptyset$. Edge weights and node weights have positive constant values, i.e., $w(v)=\alpha$, $b(e)=\beta$ for all $v\in V$, $e\in E$. Let $\alpha=\beta:=1$ for simplicity. 
Let $d$ be the output of the algorithm that computes the streaming cost of the graph $G:=G_1 \pcomp G_2 \pcomp \ldots \pcomp G_k$ given that $\mathcal{R}=\{R_1,R_2\}$. If $d=n:=2\cdot \sum_{i=1}^k s_i$, we answer the partitioning problem positively and negatively otherwise.
It remains to show that $d(G)=n$ if and only if there exists a perfect partitioning of $\mathcal{S}$. If a perfect partitioning exists then the following mapping provides an optimal task allocation:
\begin{displaymath}
 r(v) := \begin{cases} R_1 &\mbox{if } v\in G_i \mbox{ and } \sigma(s_i)=S_1 \\ 
R_2 & \mbox{otherwise,}
\end{cases}
\end{displaymath}
where $\sigma$ is an optimal partitioning of $\mathcal{S}$ into two sets $S_1,S_2$. Since $\sigma$ is perfect it holds that $|R_1|=|R_2|=n/2$. Each path in $G$ entails processing cost of $2\cdot n/2$ and zero transfer cost. For any other mapping $r'$ with $n(R_1)=n(R_2)=n/2$ there is at least one path $(u,v)$ with $r'(u)\neq r'(v)$ yielding streaming cost $d(G)>n$. For any mapping $r'$ with $n(R_1)\neq n(R_2)$, let  $n(R_1)> n(R_2)$ w.l.o.g., then there exists a component $G_i$ with more nodes mapped to $R_1$ than $R_2$. Hence, there exists a path $(u,v)\in G_i$ with $r'(u)=r'(v)=R_1$ yielding streaming cost $d(G)=2\cdot |R_1|>n$, which contradicts optimality.
 If no perfect partitioning exists then any mapping $r$ with $n(R_1)=n(R_2)=n/2$ implies that there is a component $G_i$ with nodes mapped to $R_1$ and $R_2$. Hence, there exists a path in $G_i$ with cost $2\cdot n/2 + 1$ and $d(G)>n$. If the optimal mapping $r$ chooses $n(R_1)\neq n(R_2)$ then the same argument holds as in the case where a perfect partitioning exists: if  $n(R_1)> n(R_2)$, then there is a $P=\{(u,v)\}$ with $r(u)=r(v)=R_1$, which entails that $d(G)>n$.
\end{proof}

Note that the reduction of the proof of Theorem~\ref{thm:NPharduniform} uses a streaming graph where the number of nodes is proportional to the sum of the numbers $s_i$ of the partitioning problem. Since partition problem instances are only NP-hard if they contain $s_i \in \mathcal{S}$ that are exponentially large in $|\mathcal{S}|$~\cite{mertens06}, the used streaming graph contains a number of nodes that is exponential in the bit representation of the partitioning instance. 
Therefore, Theorem~\ref{thm:NPharduniform} proves NP-hardness only for concise representations of task allocation problem instances. For example, each $G_i$ component of the used graph can be described in polynomial space similarly as in the proof. While we leave the question of hardness for non-concise SPD graph representations open, it is possible to adapt the approximation algorithm presented in Section~\ref{sec:algorithm} so as to handle concise instances of the graphs used in the proof of Theorem~\ref{thm:NPharduniform}.
Note that the edge and task weight constants, $\alpha$ and $\beta$, in the proof of Theorem~\ref{thm:NPharduniform} can be set independently to any positive value. As such, the proof  holds for computationally constrained graphs as well as non-constrained graphs.

While the task allocation problem is NP-hard, the simple algorithm that assigns all tasks to one resource, i.e., $r(u):=R_1$ for all $u \in V$, achieves an $n$-approximation: Obviously, we have that $\ell(e) = 0$ for all $e\in E$, which implies that $d(P) = \sum_{v \in P} d(v) = n\sum_{v \in P} w(v)$ for all $P \in \mathcal{P}$. Therefore, we get that $d(G) = n \sum_{v \in P'} w(v)$ where $P'$ is the path with the largest sum of node weights.
Since for \emph{any} allocation, $d(G) \ge d(P')$ and the minimum cost of path $P'$ is
\begin{displaymath}
d(P') =  \sum_{v \in P'} d(v) + \sum_{e \in P'} \ell(e) \ge \sum_{v \in P'} d(v) \ge \sum_{v \in P'} w(v),
\end{displaymath}
the claimed bound follows.
This straightforward solution is optimal if $c=1$ or if edge weights are exceedingly large; in particular if $b(e) \ge k n D w_{max}$ for each edge $e\in E$, where $D$ is the graph diameter, $w_{max} := \max_{v \in V} w(v)$, and $k$ is a constant $\ge 1$. Collocating all nodes on one resource is optimal since the streaming cost of any path is upper bounded by $nDw_{max}$.

The case of large edge weights can be considered the opposite of computationally constrained graphs, since the transfer costs dominate the solution rather than the processing costs. Imposing a specific \emph{upper bound} on the edge weights, on the other hand, results in a computationally constrained graph:

\begin{lemma}\label{lma:compConstrained}
If $b(e) \le k w_{min}\lceil D/c\rceil$ for each $e\in E$, where $w_{min} := \min_{v \in V} w(v)$, $D$ is the diameter of $G$,  and $k$ is a positive constant, then $G$ is computationally constrained.
\end{lemma}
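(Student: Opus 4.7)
The plan is to fix an arbitrary mapping $r:V\to\mathcal{R}$ and exhibit a constant $C=C(k)$ such that $d(G)\le C\,\hat{d}(G)$. First I would let $P^*$ be a path achieving $d(G)$ and split its cost as $d(G) = \hat{d}(P^*)+\sum_{e\in P^*}\ell(e) \le \hat{d}(G)+\sum_{e\in P^*}\ell(e)$, using $\hat{d}(P^*)\le\hat{d}(G)$. Hence it suffices to bound the transfer contribution on $P^*$ by a constant multiple of $\hat{d}(G)$.

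For the upper bound, $P^*$ has at most $D$ edges and the hypothesis gives $\ell(e)\le b(e)\le k\,w_{min}\lceil D/c\rceil$, so the transfer cost on $P^*$ is at most $Dk\,w_{min}\lceil D/c\rceil$. For a matching lower bound on $\hat{d}(G)$, I would pick a source-to-sink path $Q$ realizing the diameter (so $|Q|=D+1$), set $q_R:=|\{v\in Q:r(v)=R\}|$, and use $n(R)\ge q_R$ to obtain $\hat{d}(Q)\ge w_{min}\sum_{v\in Q}n(r(v)) = w_{min}\sum_R q_R\,n(R) \ge w_{min}\sum_R q_R^2 \ge w_{min}(D+1)^2/c$, where the last step is Cauchy--Schwarz applied to the $c$ counts summing to $D+1$. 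Therefore $\hat{d}(G)\ge w_{min}(D+1)^2/c$.

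Dividing and using $c\lceil D/c\rceil\le D+c$ yields a transfer-to-$\hat{d}$ ratio of at most $Dk(D+c)/(D+1)^2 \le k(D+c)/(D+1)$, which is at most $2k$ whenever $c\le D+1$. The complementary regime $c>D+1$ would be handled separately: in that case $\lceil D/c\rceil=1$, so the transfer cost collapses to at most $Dk\,w_{min}$, while the coarser bound $\hat{d}(G)\ge w_{min}(D+1)$ (from $n(r(v))\ge 1$ summed along $Q$) gives a ratio of at most $k$. Combining the two regimes yields $d(G)\le(1+2k)\hat{d}(G)\in\BigO(\hat{d}(G))$, as required.

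The main obstacle is the lower bound on $\hat{d}(G)$: the trivial estimate $\hat{d}(G)\ge w_{min}(D+1)$ obtained from $n(r(v))\ge 1$ alone is too weak when $c\ll D$, since it does not grow with the unavoidable collocation forced on a diameter-long path. The key observation is that the Cauchy--Schwarz bound on the counts $q_R$ extracts exactly the quadratic blow-up in processing cost that pigeonhole imposes on $Q$, and this quadratic boost precisely offsets the $\lceil D/c\rceil$ factor hidden in the edge-weight hypothesis, so that the two regimes $c\le D$ and $c>D$ both collapse to an $\BigO(1)$ overhead.
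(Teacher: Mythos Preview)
Your proof is correct and follows essentially the same route as the paper's: bound the total transfer cost on any path by $Dk\,w_{min}\lceil D/c\rceil$, then lower-bound $\hat{d}(G)$ by $\Omega(w_{min}(D+D^2/c))$ via the observation that the $D{+}1$ nodes on a diameter path must share at most $c$ resources, forcing a quadratic processing cost. Your explicit use of Cauchy--Schwarz on the counts $q_R$ and your clean case split $c\le D{+}1$ versus $c>D{+}1$ make the argument more rigorous than the paper's version, which only sketches the ``most even distribution'' lower bound, but the underlying idea is identical.
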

\begin{proof}
 Due to the bound on $b(e)$, the maximum transfer cost along any path is $\max_{P \in \mathcal{P}(G)} \sum_{e \in P} b(e) \le D k w_{min}\lceil D/c\rceil$.
We will now show that
$\hat{d}(G) \in \Omega\left(w_{min}\left(D+D^2/c\right)\right)$
for any $G$, $c$, and mapping $r$, which implies that $d(G) \in O(\hat{d}(G))$.
The lower bound of $\Omega(w_{min}D)$ is trivial since $d(v) \ge w_{min}$ for all $v \in V$, resulting in a total processing cost of at least $w_{min}D$ on each path of length $D$.
Assume for the sake of argument that there is a path $P$ of length $D$ where $w(v) = w_{min}$ for each $v\in P$. The streaming cost of $P$ is at least $\lceil w_{min}D^2/c\rceil$ even when excluding transfer cost and \emph{all} $c$ resources are used up exclusively for $P$. Naturally, the streaming cost can only increase when there are nodes with larger weights, the resources are shared with other nodes, or the nodes on $P$ are mapped to fewer than $c$ resources. Thus, $\Omega(w_{min}D^2/c)$ is a lower bound on $d(G)$ as claimed.
\end{proof}

The lemma shows that the edge weights may be considerably larger than the node weights, in the order of $D/c$, without affecting the asymptotic streaming cost. More generally, if $\sum_{e \in P} b(e) \in O(\hat{d}(G))$ for all $P \in \mathcal{P}(G)$ for any mapping $r: V \rightarrow \mathcal{R}$, then $G$ is computationally constrained as well.

\section{Algorithm}
\label{sec:algorithm}
Before describing the main algorithm, we start with a straightforward algorithm to illustrate the difficulty of the task allocation problem.
The algorithm $\mathcal{A}^{avg}$ strives to distribute the work equally among the resources. Specifically, it partitions the nodes into $c$ groups such that the sum of node weights is as balanced as possible. More formally, it minimizes
\begin{displaymath}
\max_{R\in \mathcal{R}} \sum_{v \in r^{-1}(R)} w(v) -\min_{R\in \mathcal{R}} \sum_{v \in r^{-1}(R)} w(v),
\end{displaymath}
where $r^{-1}: \mathcal{R} \rightarrow 2^V$ determines the set of nodes mapped to a particular resource $R \in \mathcal{R}$.\footnote{Note that finding such a partitioning is NP-hard by itself.}  
While this approach seems reasonable, there are instances where $\mathcal{A}^{avg}$ fails to achieve a better approximation ratio than the trivial algorithm that only uses one resource. We can take the graph consisting of $n$ parallel nodes, where $w(v_1) := n/3$ and $w(v_i) := 1$ for all $i\in \{2,\ldots,n\}$, as an example and set $c:=2$. As the sum of weights is $\frac{4}{3}n-1$, algorithm $\mathcal{A}^{avg}$ attempts to assign $\frac{2}{3}n-\frac{1}{2}$ work to each resource. Without loss of generality, let $r(v_1) = R_1$. Since the weight of all other nodes is $1$, we get that $n(R_1) \ge \frac{1}{3}n-1$. Hence it follows that $d(v_1) \ge \frac{n^2}{9}-\frac{n}{3} \in \Omega(n^2)$, implying that $d(G) \in \Omega(n^2)$. The optimal solution, however, dedicates one resource completely to $v_1$, which entails that $d(v_1) = n/3$ and $d(v_i) = n-1$ for all $i\in \{2,\ldots,n\}$. The streaming cost is therefore only $d(G) \in \BigO(n)$.

Instead of tackling the task allocation problem directly, we will now take a detour and present an algorithm for a continuous version of the problem, which our main algorithm will leverage.

\subsection{Continuous Algorithm}

The continuous version of Problem~\ref{def:general_problem} is defined as follows. There is a \emph{capacity} $c > 0$ that must be assigned to the tasks, i.e., each task $v$ gets a \emph{share} $x(v) \in \mathbb{R}^+$ of the capacity subject to $\sum_{v \in V} x(v) \le c$. Given a task's weight and share, the \emph{continuous processing cost} $\delta(v)$ is defined as
$\delta(v) := w(v) / x(v)$. 
There are no transfer costs in this model, and hence the streaming cost of a path $P$ is $\delta(P) := \sum_{v\in P} \delta(v)$. As in the discrete model, the streaming cost of $G$ in the continuous model is the maximum streaming cost over all paths, i.e.,
$\delta(G) := \max_{P\in \mathcal{P}(G)} \delta(P)$.
%
The goal is to assign shares in a way that minimizes the streaming cost.

\begin{problem}\label{def:continuous_problem} Given a weighted directed acyclic graph $G$ and a capacity $c>0$, find a mapping $x: V \rightarrow \mathbb{R}^+$ that minimizes $\delta(G)$.
\end{problem}

Obviously, it must hold that $\sum_{v \in V} x(v) = c$ in an optimal allocation, i.e., the entire capacity is assigned. The motivation for studying this problem is that the optimal solution of Problem~\ref{def:continuous_problem} is a lower bound on the streaming cost in the discrete model.

\begin{theorem} For all graphs $G$ and $c \in \mathbb{N}$ it holds that $\delta(G) \le d(G)$, where $c$ is the capacity in the continuous case and the number of resources in the discrete case.
\end{theorem}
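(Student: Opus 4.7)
The plan is to construct, from any discrete allocation $r$, an explicit continuous allocation $x$ whose cost matches the processing-cost portion of $d$ and then observe that transfer costs only make the discrete cost larger. In particular, taking $r$ to be an optimal discrete allocation will give a feasible $x$ whose streaming cost is at most $d(G)$, which is enough since the optimal continuous $\delta(G)$ is no worse than the one achieved by this specific $x$.

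Concretely, given a discrete mapping $r: V \to \mathcal{R}$, I would set
\[
x(v) := \frac{1}{n(r(v))}
\]
for every $v \in V$. This mimics the equal-share semantics behind the discrete cost model: a resource hosting $n(R)$ tasks gives each of them exactly a $1/n(R)$ fraction of its capacity. Feasibility follows from a simple double counting,
\[
\sum_{v \in V} x(v) \;=\; \sum_{R \in \mathcal{R} : n(R) > 0} \sum_{v : r(v) = R} \frac{1}{n(R)} \;=\; \sum_{R \in \mathcal{R} : n(R) > 0} 1 \;\le\; c,
\]
with equality whenever every resource is used.

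Under this $x$, the continuous per-task cost is exactly the discrete processing cost, since
\[
\delta(v) \;=\; \frac{w(v)}{x(v)} \;=\; w(v) \cdot n(r(v)) \;=\; d(v).
\]
Because edge transfer costs satisfy $\ell(e) \ge 0$, for every source-to-sink path $P \in \mathcal{P}(G)$ we get $\delta(P) = \sum_{v \in P} d(v) \le \sum_{v \in P} d(v) + \sum_{e \in P} \ell(e) = d(P)$, and taking the max over $P$ yields $\delta(G) \le d(G)$ under the chosen $x$. Finally, letting $r$ be an optimal discrete allocation and noting that the optimal continuous cost is no larger than the cost of this particular feasible $x$ completes the proof.

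There is essentially no hard step here; the only subtle point is the feasibility check and the corner case of unused resources (where $n(R) = 0$), which is handled cleanly by simply omitting those resources from the sum and noting that the total share is bounded by $c$ rather than necessarily equal to $c$. Everything else is a one-line algebraic identification of $\delta(v)$ with $d(v)$ plus nonnegativity of transfer costs.
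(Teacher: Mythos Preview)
Your proof is correct and follows the same approach as the paper: define $x(v) := 1/n(r(v))$ from an optimal discrete allocation, verify feasibility by the double-counting identity, and conclude that this feasible continuous solution has cost at most $d(G)$. Your treatment is in fact slightly more careful than the paper's, since you explicitly handle unused resources (giving $\sum_v x(v) \le c$ rather than $=c$) and spell out the role of the nonnegativity of transfer costs.
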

\begin{proof}
Consider the optimal solution $r$ of Problem~\ref{def:general_problem}. Set $x(v) := 1/n(r(v))$ for all $v\in V$. It holds that
$$\sum_{v\in V} \frac{1}{n(r(v))} = \sum_{R \in \mathcal{R}}\hspace{2pt} \sum_{v\in V: r(v)=R} \frac{1}{n(r(v))}= \sum_{R \in \mathcal{R}} \frac{n(R)}{n(R)}= c$$
due to the fact that there are $n(r(v))=n(R)$ tasks assigned to resource $R$ by definition. As the sum of shares is $c$, it is a valid solution for Problem~\ref{def:continuous_problem}. Thus, the optimal continuous cost $\delta(G)$ can only be lower or equal.
\end{proof}

The two problems are indeed strongly related. If a resource is shared among $k$ tasks, the processing cost in the discrete model is $d(v) = k\cdot w(v)$ for each such task $v$. In other words, each task gets a \emph{share} of $1/k$ of the resource, which in the continuous model corresponds to a processing cost of $\delta(v) = w(v) /x(v) = k\cdot w(v)$, i.e., a share $x(v)$ can be interpreted as the fraction of a resource dedicated to $v$.
Of course, the continuous model does not truly have a notion of a ``resource'' as the capacity $c$ can be split up arbitrarily. Moreover, it is admissible to assign a share greater than 1 to a task in the continuous model, which would mean that a task is assigned to more than one resource.
Nevertheless, the relation between the problems can be exploited.
First, we formulate and analyze an algorithm, $\mathcal{A}^{cont}$, which solves Problem~\ref{def:continuous_problem} for SPD graphs, then we present  an algorithm $\mathcal{A}^{disc}$ for the discrete case, which uses $\mathcal{A}^{cont}$ as a subroutine.

Algorithm  $\mathcal{A}^{cont}$ takes $z_0$, i.e., the root of the SPD tree corresponding to graph $G$, and $c$ as input parameters and computes the optimal mapping $x: V \rightarrow \mathbb{R}^+$. To this end, it first calls procedure \textit{computeWeights} with the parameter $z_0$, which computes weights for each node in the tree. The weight of a node $z$ corresponds to the optimal streaming cost of the subtree rooted at $z$.
Next, procedure \textit{computeMapping} is called with parameters $z_0$ and $c$, which derives the optimal mapping $x$ based on the weights computed in the previous step.

Procedure \textit{computeWeights} (see Algorithm~\ref{algo:computeWeights}) recursively computes the weights of all children of a node $z$. The weight $w(z)$ of a leaf $z$ equals the weight $w(v)$ of the corresponding graph node $v\in V$.
The weight of an internal node $z$ is computed from the weights of its children:   $w(z)$ is set to $(\sum_{z_i \in \mathcal{C}(z)} \sqrt{w(z_i)})^2$ if $op = s$ and $\sum_{z_i \in \mathcal{C}(z)} w(z_i)$ if $op = p$.

\begin{algorithm}[t]
\small
   \caption{computeWeights($z$)}
   	\label{algo:computeWeights}
	\begin{algorithmic}[1]
	  \FOR{$z_i \in \mathcal{C}(z)$}
            \STATE computeWeights($z_i$)
            \IF{$z = (s,\cdot)$}
	      \STATE $w(z) := \left(\sum_{z_i \in \mathcal{C}(z)} \sqrt{w(z_i)}\right)^2$
            \ELSE
              \STATE $w(z) := \sum_{z_i \in \mathcal{C}(z)} w(z_i)$
            \ENDIF
          \ENDFOR
     \end{algorithmic}
  \end{algorithm}

Procedure \textit{computeMapping} (see Algorithm~\ref{algo:computeMapping}) traverses the SPD tree in a top-down fashion and maps the capacity to nodes. 
It recursively computes the partitioning of the given capacity, which is $c$ at the root $z_0$, among all children. As in procedure \textit{computeMapping}, the partitioning is different for serial and parallel compositions. Each child gets a share relative to its contribution to the sum of weights for parallel compositions, whereas the relative contribution with respect to the square roots of the weights is used for serial compositions.  When the recursion arrives at a leaf $z$ with the call  \textit{computeMapping}$(z,x)$ it receives the share $x$, which implies that $x(v) = x$ for the corresponding graph node.

In order to prove that the computed mapping is optimal, we must show that the mapping rules lead to an optimal solution, under the assumption that the computed weight of each child is the optimal streaming cost of the corresponding subtree.

\begin{lemma}\label{lemma:continuous}
Let $z = (op,\{z_1,\ldots,z_k\})$ and $w(z_i)$ be the optimal streaming cost of $z_i$. For all $i \in \{1,\ldots,k\}$, the capacity $c$ is partitioned optimally by setting
\begin{numcases}{ x(z_i) :=}
\frac{c\cdot\sqrt{w(z_i)}}{\sum_{j=1}^k \sqrt{w(z_j)}} & if $op=s$\label{eq:serial}\\
\frac{c\cdot w(z_i)}{\sum_{j=1}^k w(z_j)}  & if $op=p$.\label{eq:parallel}
\end{numcases}
\end{lemma}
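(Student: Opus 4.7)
My plan is to prove the lemma by a simultaneous induction on the SPD tree that (a) establishes the partition formulas at the current node and (b) confirms the scaling invariant that the optimal streaming cost of any subtree with assigned capacity $\gamma > 0$ equals $w(z_i)/\gamma$, where $w(z_i)$ is the quantity computed by \textit{computeWeights}. The base case (a leaf) is immediate, since $\delta(v) = w(v)/x(v)$. The inductive hypothesis is exactly what the lemma asks me to assume: each child $z_i$, under any capacity $c_i$, can be processed at optimal cost $w(z_i)/c_i$ independently of the choices made inside its sibling subtrees (this independence is built into the SPD construction, where distinct children share no vertices).

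For the parallel case, the subgraphs under $z_1, \ldots, z_k$ are vertex-disjoint and every source-to-sink path of $z$ lies entirely in some $z_i$, so $\delta(z) = \max_i w(z_i)/c_i$ once each $z_i$ is optimized internally. I would then invoke a standard balancing argument: if $w(z_i)/c_i > w(z_j)/c_j$ for some $i,j$, transferring an infinitesimal amount of capacity from $z_j$ to $z_i$ strictly decreases the maximum. Hence at the optimum $w(z_i)/c_i = K$ is constant, and combining this with $\sum_i c_i = c$ yields $c_i = c \cdot w(z_i)/\sum_j w(z_j)$ and optimal cost $(\sum_j w(z_j))/c$, confirming formula~(2) and the weight $w(z) = \sum_j w(z_j)$ from Algorithm~\ref{algo:computeWeights}.

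For the serial case, every source-to-sink path of $z$ traverses a source-to-sink path of each $z_i$ in turn, so after internal optimization $\delta(z) = \sum_i w(z_i)/c_i$. I would minimize this subject to $\sum_i c_i = c$ via Cauchy--Schwarz:
\[
\left(\sum_{i=1}^k \sqrt{w(z_i)}\right)^2 = \left(\sum_{i=1}^k \sqrt{w(z_i)/c_i}\cdot \sqrt{c_i}\right)^2 \le \left(\sum_{i=1}^k \frac{w(z_i)}{c_i}\right)\left(\sum_{i=1}^k c_i\right),
\]
so $\delta(z) \ge (\sum_i \sqrt{w(z_i)})^2/c$ with equality iff $\sqrt{w(z_i)/c_i}/\sqrt{c_i}$ is constant across $i$, i.e.\ $c_i \propto \sqrt{w(z_i)}$. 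Normalizing to $\sum_i c_i = c$ gives formula~(1) and optimal value $(\sum_i \sqrt{w(z_i)})^2/c = w(z)/c$, consistent with Algorithm~\ref{algo:computeWeights}. The main subtlety is the formal justification that an optimal internal assignment for each child can be chosen independently of the capacities handed to its siblings; once the vertex-disjointness of sibling subtrees is combined with the scaling invariant $\delta_i^*(\gamma) = w(z_i)/\gamma$, the max-of and sum-of decompositions fall out cleanly, and the remaining work is the one-shot optimization handled by balancing and Cauchy--Schwarz.
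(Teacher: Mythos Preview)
Your argument is correct. Both the parallel balancing step and the serial optimization via Cauchy--Schwarz are sound, and the scaling invariant $\delta_i^*(\gamma) = w(z_i)/\gamma$ is exactly what makes the one-level decomposition legitimate.

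The route differs from the paper's. For the serial case, the paper writes the objective as a function of $x_1,\ldots,x_{k-1}$ (substituting $x_k = c - \sum_{j<k} x_j$), sets all partial derivatives to zero, and solves the resulting system $w(z_i)/x_i^2 = w(z_k)/x_k^2$; for the parallel case it simply asserts that the maximum of the terms $w(z_i)/x_i$ is minimized when all terms are equal. Your Cauchy--Schwarz argument is more self-contained: it yields the optimal value and the equality condition in one stroke, so you never have to argue separately that the stationary point is a global minimum (the paper's first-order computation technically only locates a critical point and relies implicitly on convexity to conclude). Likewise, your perturbation argument for the parallel case makes explicit what the paper leaves as a one-line claim. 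The extra inductive scaffolding you set up (the scaling invariant and the independence of sibling subtrees) goes somewhat beyond what the lemma itself asks---the paper defers that packaging to Fact~\ref{fact:linear} and the subsequent theorem---but it does no harm and in fact anticipates the next step.
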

\begin{proof}
 Let $x_i := x(z_i)$. The multivariate function that describes the streaming cost of $z$ is
\begin{displaymath}
w_s(x_1,\ldots,x_{k-1}) := \left(\sum_{i=1}^{k-1} \frac{w(z_i)}{x_i}\right) + \frac{w(z_k)}{c-\sum_{j=1}^{k-1} x_j}.
\end{displaymath}
The minimum of this function is attained when $\frac{\partial w_s}{\partial x_i} = 0$ for all $i = 1,\ldots,k-1$, which implies that
\begin{displaymath}
\forall i \in \{1,\ldots,k-1\}: \frac{w(z_i)}{x_i^2} = \frac{w(z_k)}{\left(c-\sum_{j=1}^{k-1} x_j\right)^2}.
\end{displaymath}
Solving this equation for $x_i$ yields $x_i = \frac{c\cdot\sqrt{w(z_i)}}{\sum_{j=1}^k \sqrt{w(z_j)}}$ for all $i\in \{1,\ldots,k\}$ as claimed.

The same pattern can be used to derive the optimal partitioning for parallel compositions.
As before, let $x_i := x(z_i)$, and the multivariate function for the streaming cost of $z$ is
\begin{displaymath}
w_p(x_1,\ldots,x_k) := \max\left(\frac{w(z_1)}{x_1},\ldots,\frac{w(z_k)}{c - \sum_{i=1}^{k-1} x_i}\right).
\end{displaymath}
This function is minimized if each term is equal, which is the case if $c \cdot w(z_i)/x_i = \sum_{j=1}^k w(z_j)$ as claimed.
\end{proof}

 \begin{algorithm}[t]
\small
\caption{computeMapping($z,c$)}
   \label{algo:computeMapping}
   \begin{center}
    \begin{algorithmic}[1]
	\IF{$z$ is leaf}
            \STATE $x(z) := c$
        \ELSE
	 \FOR{$z_i \in \mathcal{C}(z)$}
            \IF{$z = (s,\cdot)$}
              \STATE $W := \sum_{z_i \in \mathcal{C}(z)} \sqrt{w(z_i)}$
	      \STATE computeMapping($z_i,c \cdot \sqrt{w(z_i)}/W$)
            \ELSE
              \STATE $W := \sum_{z_i \in \mathcal{C}(z)} w(z_i)$
              \STATE computeMapping($z_i,c \cdot w(z_i)/W$)
            \ENDIF
          \ENDFOR
        \ENDIF
\vspace{-5pt}
     \end{algorithmic}
   \end{center}
 \end{algorithm}

An important observation is that the optimal partitioning scales linearly with $c$ in both cases, i.e., the relative distribution among the constituent parts is independent of $c$.

\begin{fact}\label{fact:linear}
The optimal partitioning for both serial and parallel compositions scales linearly with the capacity $c$.
\end{fact}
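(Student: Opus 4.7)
The plan is to obtain this fact as an essentially immediate corollary of Lemma~\ref{lemma:continuous}, and then extend it by induction over the SPD tree to cover the entire recursive allocation carried out by \textit{computeMapping}.

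First I would invoke the closed-form expressions \eqref{eq:serial} and \eqref{eq:parallel} from Lemma~\ref{lemma:continuous} for a single composition node $z=(op,\{z_1,\ldots,z_k\})$ with given child weights $w(z_1),\ldots,w(z_k)$. The key observation is syntactic: in both formulas the capacity $c$ appears only as a multiplicative prefactor, while the ratios $\sqrt{w(z_i)}/\sum_j \sqrt{w(z_j)}$ and $w(z_i)/\sum_j w(z_j)$ depend solely on the (already computed) weights of the child subtrees and not on $c$. Hence if the root capacity is scaled from $c$ to $\lambda c$ for any $\lambda>0$, each $x(z_i)$ is scaled by exactly $\lambda$, so the relative distribution of capacity among the children is invariant in $c$.

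Next I would lift this one-step statement to the whole tree via structural induction on the SPD tree $T(z_0)$, which is exactly what \textit{computeMapping} traverses. The base case is a leaf $z$, where $x(z):=c$ scales trivially. For the inductive step, suppose $z$ is an internal node whose capacity has been scaled by $\lambda$. By the one-step argument above, the capacity $c\cdot\sqrt{w(z_i)}/W$ (serial case) or $c\cdot w(z_i)/W$ (parallel case) forwarded to each child $z_i$ is also scaled by $\lambda$. Applying the induction hypothesis to each child subtree then yields that every leaf share $x(v)$ in $T(z)$ is scaled by $\lambda$, completing the induction. In particular, the optimal mapping produced by \textit{computeMapping}$(z_0,c)$ is a linear function of $c$.

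I do not expect a real obstacle here: the fact is essentially algebraic bookkeeping on top of Lemma~\ref{lemma:continuous}. The only mild subtlety worth flagging is that the weights $w(z_i)$ computed by \textit{computeWeights} are functions only of the input graph weights, independently of the capacity $c$ assigned at the root; this is what makes the ratios in \eqref{eq:serial} and \eqref{eq:parallel} genuinely $c$-free and is what allows the linear scaling to propagate cleanly through the recursion.
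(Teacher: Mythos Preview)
Your proposal is correct and matches the paper's approach: the paper treats this fact as an immediate observation from the closed forms in Lemma~\ref{lemma:continuous}, noting that $c$ enters only as a multiplicative prefactor in \eqref{eq:serial} and \eqref{eq:parallel}. Your inductive extension to the whole SPD tree is not stated as part of the fact itself in the paper, but it is precisely the content the paper exploits in the subsequent proof of the optimality of $\mathcal{A}^{cont}$, so spelling it out here is harmless and arguably clarifying.
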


This fact is important as it implies that the optimal solution can be built recursively as long as the optimal costs of all subtrees are known, which is exactly what Algorithm $\mathcal{A}^{cont}$ does. The following theorem states the main result.

\begin{theorem}
Algorithm $\mathcal{A}^{cont}$ computes an optimal mapping \mbox{$x: V \rightarrow \mathbb{R}^+$} for any SPD graph $G$ and capacity $c > 0$.
\end{theorem}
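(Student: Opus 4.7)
The plan is to prove the theorem by structural induction on the SPD tree $T(G)$, establishing simultaneously that (i) for every node $z$, the quantity $w(z)$ produced by \textit{computeWeights} is a \emph{normalized} optimal streaming cost in the sense that the optimum of Problem~\ref{def:continuous_problem} on the subgraph $G_z$ corresponding to the subtree rooted at $z$, under any capacity $c'>0$, equals $w(z)/c'$, and (ii) the allocation returned by \textit{computeMapping}$(z,c')$ attains this optimum.

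The base case is immediate: a leaf $z$ corresponds to a single node $v\in V$, so the only admissible allocation of capacity $c'$ is $x(v):=c'$, giving $\delta(G_z)=w(v)/c'$, which matches $w(z)=w(v)$. For the inductive step, I would fix an internal node $z=(op,\{z_1,\ldots,z_k\})$ and assume both claims for each $z_i$. By the inductive hypothesis, for any capacity $x_i>0$ assigned to the $i$-th child the optimal cost of $G_{z_i}$ is $w(z_i)/x_i$. Hence the problem of optimally distributing capacity $c'$ among the children reduces exactly to the minimization problem analyzed in Lemma~\ref{lemma:continuous}: minimize $\sum_i w(z_i)/x_i$ subject to $\sum_i x_i = c'$ in the serial case, or minimize $\max_i w(z_i)/x_i$ subject to $\sum_i x_i = c'$ in the parallel case. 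Lemma~\ref{lemma:continuous} supplies the optimal shares, which coincide with what \textit{computeMapping} assigns. Substituting those shares back produces a streaming cost of $\bigl(\sum_i\sqrt{w(z_i)}\bigr)^2/c'$ in the serial case and $\bigl(\sum_i w(z_i)\bigr)/c'$ in the parallel case, matching the formulas used by \textit{computeWeights}; this closes the induction.

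To complete the argument I would invoke Fact~\ref{fact:linear} at the transition between levels: the inductive hypothesis says the optimum on $G_{z_i}$ is $w(z_i)/x_i$ for \emph{every} capacity $x_i$, but one must also know that whatever internal mapping realizes this optimum can be rescaled to match the share $x_i$ actually handed down by \textit{computeMapping}. Fact~\ref{fact:linear} guarantees precisely this linear scalability, so the recursion composes without loss of optimality. Applying the inductive statement to the root $z_0$ with $c'=c$ then yields that $\mathcal{A}^{cont}$ outputs an optimal mapping for $G$.

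The main obstacle is getting the inductive invariant right, namely the normalized form ``optimum $=w(z)/c'$ for every $c'$.'' A naive induction that only records a single optimal cost per subtree cannot be glued together by Lemma~\ref{lemma:continuous}, because the lemma demands that each child's cost be an explicit function of the share it receives. The combination of the inductive hypothesis and Fact~\ref{fact:linear} is exactly what allows each subtree to be treated as a ``virtual leaf'' of effective weight $w(z_i)$, reducing the full problem to a single application of Lemma~\ref{lemma:continuous} at every internal node.
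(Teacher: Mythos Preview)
Your proposal is correct and follows essentially the same approach as the paper: structural induction on the SPD tree, invoking Lemma~\ref{lemma:continuous} at each internal node and Fact~\ref{fact:linear} to justify that subtrees behave like virtual leaves of weight $w(z_i)$. The paper separates the two claims (correctness of \textit{computeMapping} assuming correct weights, then correctness of \textit{computeWeights} by induction with $c:=1$), whereas you bundle them into a single invariant ``optimum $=w(z)/c'$ for every $c'$''; this is a cosmetic difference, and if anything your formulation makes the role of Fact~\ref{fact:linear} more transparent.
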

\begin{proof}
Lemma~\ref{lemma:continuous} and Fact~\ref{fact:linear} show that procedure \textit{computeMapping} optimally partitions the capacity in a recursive manner under the assumption that all weights correspond to the minimal streaming cost of the respective subtree. It remains to prove that procedure \textit{computeWeights} indeed computes the optimal weights.

We can use an inductive argument to prove this. As the capacity $c$ merely scales the optimal solution linearly, we can ignore it when computing the weights by setting $c := 1$.
Consider the base case of a serial or parallel composition of leaves, i.e., real nodes in $V$. Let $z$ denote the root of the SPD tree of this subgraph. If it is a serial composition, Lemma~\ref{lemma:continuous} reveals that the optimal streaming cost is 
\begin{align*}
w(z) &= \sum_{z_i\in \mathcal{C}(z)} \frac{w(z_i)}{x(z_i)}    \stackrel{\eqref{eq:serial}}{=} \sum_{z_i\in \mathcal{C}(z)} \frac{w(z_i)}{\sqrt{w(z_i)}}\left(\sum_{z_j\in \mathcal{C}(z)} \sqrt{w(z_j)}\right) \\
 &= \left(\sum_{z_i\in \mathcal{C}(z)} \sqrt{w(z_i)}\right)^2.
\end{align*}

Similarly, we can use Lemma~\ref{lemma:continuous} to get the optimal streaming cost for a parallel composition, which is 
\begin{align*}
w(z) & = \max_{z_i \in \mathcal{C}(z)} \frac{w(z_i)}{x(z_i)} 
 \stackrel{\eqref{eq:parallel}}{=} \max_{z_i \in \mathcal{C}(z)} \frac{w(z_i)}{w(z_i)}\sum_{z_i \in \mathcal{C}(z)} w(z_i) \\
 &= \sum_{z_j\in \mathcal{C}(z)} w(z_i).
\end{align*}

Thus, procedure \textit{computeWeights} computes the optimal cost, i.e., weight, in both cases and by induction, all weights are computed optimally for the entire graph.
\end{proof}

\subsection{Discrete Algorithm}
We use the optimal continuous algorithm $\mathcal{A}^{cont}$ to devise an algorithm, $\mathcal{A}^{disc}$, for (the discrete) Problem~\ref{def:general_problem}.
The main idea is to use the optimal continuous shares as an indicator for the number of tasks that should be mapped to individual resources. Algorithm $\mathcal{A}^{disc}$ is a greedy algorithm that allocates tasks to resources starting with the tasks with the largest shares, i.e., the tasks mapped to resources that must not be shared with many other tasks. The algorithm must overcome two issues: First, it is not possible to allocate tasks greedily in such a way that $n(r(v))$ is proportional to $1/x(v)$ for all $v \in V$.
The second issue is that some shares may be larger than 1. An illustrative pathological example is the case where one task $v$ has an exorbitantly large weight, resulting in a share of $x(v) \lesssim c$. If $d(G) \in \BigO(d(v)) = \BigO(w(v)/c)$, the best possible discrete solution is at least a factor of $c$ worse as $d(v) \ge w(v)$ for all $v \in V$ and all allocations.

We will now present Algorithm $\mathcal{A}^{disc}$, which is given in Algorithm~\ref{algo:discrete}, and show how it overcomes the aforementioned issues.
After computing the optimal continuous shares, the largest share $x(v)$ for some task $v \in V$ is rounded down and fixed to $1$ if it exceeds this threshold. It is fixed in the sense that task $v$ is removed from the optimization problem and replaced with the constant $d(v)$. Subsequently, algorithm $\mathcal{A}^{cont}$ is executed again with this added constraint. If the largest share still exceeds $1$, the same steps are carried out until all shares are upper bounded by $1$ (Lines~1-9).
These modified shares are then used to allocate the tasks to the $c$ resources as follows. The shares are first sorted in decreasing order, resulting in shares $\bar{x}_1 \ge \ldots \ge \bar{x}_n$. The algorithm then performs a single pass over the sorted shares, starting at the largest value. The tasks with the $\lceil\frac{2n^{2/c}}{\bar{x}_1}\rceil$ largest shares are assigned to resource $R_1$. The share $\bar{x}_i$ at index $i = \lceil\frac{2n^{2/c}}{\bar{x}_1}\rceil+1$ determines how many resources are assigned to $R_2$, i.e., $\lceil\frac{2n^{2/c}}{\bar{x}_i}\rceil$ many. This process is repeated until all nodes are assigned to resources (Lines~9-15).

\begin{algorithm}[t]
\small
\caption{Algorithm $\mathcal{A}^{disc}$ takes $z_0$ and $c$ as parameters and computes a mapping $r:V\rightarrow \mathcal{R}$}
   \label{algo:discrete}
   \begin{center}
    \begin{algorithmic}[1]
	\REPEAT
	  \STATE \textbf{execute} Algorithm $\mathcal{A}^{cont}$
	  \STATE $\mathcal{S} := \{v \in V \;|\; x(v) > 1\}$
	  \IF{$\mathcal{S} \neq \emptyset$}
	    \STATE $v_{max} := \arg\max_{v \in \mathcal{S}} x(v)$
	    \STATE $x(v_{max}) := 1$
	    \STATE \textbf{remove} $v_{max}$
	  \ENDIF
	\UNTIL{$\mathcal{S} = \emptyset$}
        \STATE $\{\bar{x}_1,\ldots,\bar{x}_n\}$ $:=$ sort-decreasing($\{x(v_1),\ldots,x(v_n)\}$)
        \STATE $index := 1$; $k := 1$
        \WHILE{$index \le n$}
	  \STATE $size := \left\lceil\frac{2n^{2/c}}{\bar{x}_{index}}\right\rceil$
	  \FOR{$i=index,\ldots,\min\{index+size-1,n\}$}
	    \STATE $r(v) := R_k$ \textbf{where} $x(v) = \bar{x}_i$
	  \ENDFOR
	  \STATE $index := index+size$; $k := k+1$
	\ENDWHILE
\vspace{-5pt}
     \end{algorithmic}
   \end{center}
 \end{algorithm}


Lemma~\ref{lma:optConstrained} shows that Algorithm~$\mathcal{A}^{disc}$ modifies shares $x$ in a way that preserves optimality for the case when shares cannot exceed the capacity of resources. Subsequently, we state the main result in Theorem~\ref{thm:approximation}.

\begin{lemma}\label{lma:optConstrained}
Lines~1-8 in Algorithm~\ref{algo:discrete} compute optimal shares for any SPD graph G and capacity $c > 0$ subject to the constraint that shares must not exceed $1$.
\end{lemma}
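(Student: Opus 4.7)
My plan is to argue by induction on the iteration count $k$ of the repeat-loop (Lines 1--9), maintaining the invariant that after the $k$th iteration the set $\hat V_k$ of nodes the algorithm has fixed at share $1$ is contained in the share-$1$ set of \emph{some} constrained optimum. The base case is vacuous.

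For the inductive step, I use that the residual problem is a convex program: the objective $f(x)=\max_P \sum_{v\in P}w(v)/x(v)$ is a pointwise maximum of sums of strictly convex functions, and the remaining constraints (the capacity equality, the fixings $x(v)=1$ for $v\in\hat V_k$, and the box bounds $x(v)\le 1$) are affine. Hence the set of optima is nonempty, convex, and compact. By the theorem of the previous subsection, the call to $\mathcal{A}^{cont}$ in line 2 returns the residual unconstrained optimum $x^{u,k}$. If $x^{u,k}(v)\le 1$ for every $v$, then $x^{u,k}$ is feasible for the constrained problem and therefore also optimal (relaxing constraints cannot raise the optimum value), so termination is correct.

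Otherwise let $v^\star$ be the node attaining the largest share $x^{u,k}(v^\star)>1$, which the algorithm then caps. The inductive step reduces to showing that some constrained optimum $x^c$ satisfies $x^c(v^\star)=1$. My main tool is the following elementary convex-analysis fact: for convex $g$ on a convex set $K$, if some $K$-minimizer has $x_j>b$, then the minimum of $g$ over $K\cap\{x_j\le b\}$ is attained with $x_j=b$. Proof sketch: connect any restricted minimizer $y$ with $y_j<b$ to such a $K$-minimizer by the line segment; $g$ is non-increasing along the segment by convexity combined with $g$-optimality of the $K$-minimizer, and the segment pierces $\{x_j=b\}$ at an interior point that is restricted-feasible and therefore also a restricted minimizer. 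I apply this with $K$ equal to the residual feasible set with the box on $v^\star$ removed (all other box constraints retained), so that $K\cap\{x(v^\star)\le 1\}$ coincides with the residual constrained feasible set.

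The main obstacle is to verify the precondition that some $K$-minimizer $\hat x$ satisfies $\hat x(v^\star)>1$. I plan to establish this by contradiction: were every $K$-minimizer to satisfy $\hat x(v^\star)\le 1$, then each such $\hat x$ would already be restricted-feasible, forcing the $K$-minimum value to equal the residual constrained minimum value. Writing KKT stationarity at such a $\hat x$, the common Lagrangian gradient $-\mu$ shared by all non-saturated coordinates---in particular $v^\star$, which carries no box constraint in $K$---yields the identity $\sum_{P\ni v}\alpha_P\, w(v)/\hat x(v)^2=\mu$ for every non-saturated $v$, with active-path weights $\alpha_P\ge 0$ summing to one. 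Comparing this with the analogous identity at $x^{u,k}$, where $v^\star$ is the dominant coordinate by the recursive SPD share formulas of Lemma~\ref{lemma:continuous} applied along its root-to-leaf path, and tracing how the active-path weights shift between the two optima, yields $\hat x(v^\star)>1$ necessarily, a contradiction; verifying this propagation through the SPD tree is the technical heart of the proof. The convex-analysis fact then furnishes the required constrained optimum with $x(v^\star)=1$, extending the invariant to $\hat V_{k+1}:=\hat V_k\cup\{v^\star\}$ and closing the induction.
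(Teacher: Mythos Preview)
Your induction setup and the convex-analysis fact are both correct, and framing the problem as a constrained convex program is a legitimate alternative to the paper's route. The paper, however, never invokes KKT or general convexity: it argues directly on the SPD tree. Assuming for contradiction that the constrained optimum has $x(v^\star)<1$, it looks at the parent $z$ of $v^\star$ in the decomposition tree and observes a dichotomy: either the capacity among $z$'s children is not split according to the closed-form formulas of Lemma~\ref{lemma:continuous}, in which case a local reshuffle strictly lowers $w(z)$ and hence $\delta(G)$; or the split is correct but the total capacity entering $z$ is strictly smaller than what $\mathcal{A}^{cont}$ assigned. The second branch pushes the same dichotomy up to $z$'s parent, and climbing to the root eventually forces $\sum_v x(v)<c$, a contradiction.

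The gap in your proposal is exactly the step you label ``the technical heart'': showing that some $K$-minimizer has $\hat x(v^\star)>1$. Your plan---compare the KKT identity $\sum_{P\ni v}\alpha_P\,w(v)/\hat x(v)^2=\mu$ at $\hat x$ with the analogous identity at $x^{u,k}$ and ``trace how the active-path weights shift''---is underspecified. The multipliers $\mu,\mu'$ and the active-path weights $\alpha,\alpha'$ at the two optima are unrelated a priori, and nothing you have written explains why $v^\star$ having the largest \emph{unconstrained} share forces $\hat x(v^\star)>1$ once other box constraints may bind in $K$. When you actually carry this out, you will find yourself walking the SPD tree from $v^\star$ toward the root and invoking the share formulas of Lemma~\ref{lemma:continuous} level by level---which is precisely the paper's argument. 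Your convex-analysis wrapper is sound but does not buy a shortcut; the substantive work is the same tree traversal, and the paper performs it directly without the KKT detour.
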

\begin{proof}
Consider task $v$ with the largest share $x(v) > 1$. Assume for the sake of contradiction that the optimal constrained share should be $x(v) < 1$. 
Let $z$ be the parent node of $z_i=v$ in the SPD tree.
If the capacity is not distributed according to Equation~\eqref{eq:serial} (Equation~\eqref{eq:parallel}) in a serial (parallel) composition, the distribution can be changed locally, i.e., among $z$ and $\mathcal{C}(c)$, to the optimal distribution, which reduces $w(z)$ and, inductively, also $w(z_0) = \delta(G)$, contradicting optimality of the shares. 
Otherwise, the share distribution among $z$ and $\mathcal{C}(c)$ is optimal, but $x(z)$ is smaller than the $x(z)$ calculated by $\mathcal{A}^{disc}$. Tracing the cause of the lower share towards the root, we find that either the capacity was not distributed optimally or, again, a share that is too small was assigned at this level. If we arrive at $z_0$, and the capacity is distributed optimally, it must be the case that less than the full capacity $c$ was assigned, which cannot be optimal.  
This argument establishes that $x(v)$ must equal $1$. As the optimal shares are recomputed under this constraint, the same argument can be used inductively for the next largest share exceeding $1$, which proves the claim.
\end{proof}


\begin{theorem}\label{thm:approximation}
Algorithm $\mathcal{A}^{disc}$ computes an $\BigO(n^{\BigO(1/c)})$-approximation for any computationally constrained SPD graph $G$. 
\end{theorem}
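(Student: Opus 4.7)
The plan is to bound the algorithm's streaming cost $d(G)$ against the optimal discrete cost $d^*(G)$ by routing through a continuous lower bound. Let $\delta^{\le 1}(G)$ denote the optimum of the continuous problem with the extra constraint $x(v) \le 1$; by Lemma~\ref{lma:optConstrained}, the capping loop of $\mathcal{A}^{disc}$ computes exactly this value. A short observation yields $\delta^{\le 1}(G) \le d^*(G)$: any discrete allocation $r^*$ induces a feasible constrained continuous allocation $x(v) := 1/n(r^*(v)) \in (0,1]$ summing to $c$, and under this correspondence the discrete processing cost $n(r^*(v))\,w(v)$ coincides pointwise with the continuous cost $w(v)/x(v) = \delta(v)$, so the constrained continuous optimum is no larger than $d^*(G)$.

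\textbf{Per-task blow-up.} The heart of the argument is a uniform per-task bound $d(v) \le (2n^{2/c}+1)\,\delta(v)$ for every $v \in V$. Suppose the greedy pass places $v$ into group $k$, so that $r(v)=R_k$ and resource $R_k$ hosts exactly $s_k=\lceil 2n^{2/c}/\bar{x}_{i_k}\rceil$ tasks; hence $d(v)=s_k\,w(v)$. Because shares are sorted in decreasing order and group $k$ starts at index $i_k$, $v$'s share satisfies $x(v)\le \bar{x}_{i_k}$, and therefore $d(v)/\delta(v) = s_k\,x(v) \le s_k\,\bar{x}_{i_k} = \lceil 2n^{2/c}/\bar{x}_{i_k}\rceil\,\bar{x}_{i_k} \le 2n^{2/c}+\bar{x}_{i_k} \le 2n^{2/c}+1$, using $\bar{x}_{i_k} \le 1$ from the capping. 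Summing the inequality along any source--sink path $P$ gives $\hat d(P) \le (2n^{2/c}+1)\,\delta(P)$, and maximizing over $P \in \mathcal{P}(G)$ yields $\hat d(G) \le (2n^{2/c}+1)\,\delta^{\le 1}(G) \le \BigO(n^{2/c})\,d^*(G)$.

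\textbf{Closing and main obstacle.} Because $G$ is computationally constrained, $d(G) \in \BigO(\hat d(G))$, so $d(G) \in \BigO(n^{2/c})\,d^*(G) = \BigO(n^{\BigO(1/c)})\,d^*(G)$, which is the claimed approximation ratio. The real difficulty is not the per-task arithmetic but showing that the greedy pass produces a \emph{valid} allocation, i.e.\ that the number $K$ of groups it creates does not exceed $c$. The handle I would exploit is that for $k<K$ the sum $\sigma_k$ of shares inside group $k$ satisfies $\sigma_k \ge s_k\,\bar{x}_{i_{k+1}} \ge 2n^{2/c}\cdot\bar{x}_{i_{k+1}}/\bar{x}_{i_k}$, since the smallest share in group $k$ is at least $\bar{x}_{i_{k+1}}$ by the sort order. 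Summing and using $\sum_k \sigma_k = c$ yields $c \ge 2n^{2/c}\sum_{k=1}^{K-1}\bar{x}_{i_{k+1}}/\bar{x}_{i_k}$, after which AM--GM applied to the telescoping product $\prod_{k=1}^{K-1}\bar{x}_{i_{k+1}}/\bar{x}_{i_k} = \bar{x}_{i_K}/\bar{x}_{i_1}$---combined with a lower bound on $\bar{x}_{i_K}$ derived from the recursive structure of $\mathcal{A}^{cont}$ on SPD graphs---should force $K \le c$ (or at worst $K \in \BigO(c)$, which merely inflates the constants hidden in the $\BigO(n^{\BigO(1/c)})$ factor). A small secondary check is required for the share-$1$ tasks produced by the capping: for such $v$, $\delta(v)=w(v)$ and $s_1 \le 2n^{2/c}+1$, so the per-task bound applies unchanged.
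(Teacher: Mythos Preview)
Your \textbf{strategy} and \textbf{per-task blow-up} paragraphs are correct and essentially coincide with the paper's argument; in fact you are slightly more careful than the paper in explicitly noting that the capped continuous optimum $\delta^{\le 1}(G)$ (rather than the unconstrained $\delta(G)$) is the quantity dominated by $d^*(G)$, and your derivation $d(v)/\delta(v)=s_k\,x(v)\le s_k\,\bar{x}_{i_k}\le 2n^{2/c}+1$ is exactly the paper's bound $n(r(v))\le\lceil 2n^{2/c}/x(v)\rceil$ written from the other side.

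The genuine gap is in your \textbf{closing and main obstacle} paragraph. You correctly isolate validity ($K\le c$) as the hard part and derive the same key inequality the paper uses, namely $\sigma_k\ge 2n^{2/c}\,\bar{x}_{i_{k+1}}/\bar{x}_{i_k}$. But your proposed closure---AM--GM on the telescoping product together with ``a lower bound on $\bar{x}_{i_K}$ derived from the recursive structure of $\mathcal{A}^{cont}$ on SPD graphs''---does not go through. No useful a~priori lower bound on the smallest share exists for SPD graphs: a parallel composition of one node of weight $W$ with many nodes of weight $\varepsilon$ already produces shares of order $c\varepsilon/W$, which can be made arbitrarily small. So the SPD structure is a red herring here.

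The paper closes the argument purely combinatorially, with no graph structure at all. From $\sum_k\sigma_k\le c$ and your inequality one gets $c\ge\sum_{i=1}^{c}2n^{2/c}\,\bar{x}^{(i+1)}/\bar{x}^{(i)}$. Now split into two cases: if at least half of these $c$ ratio terms are $\ge 2$, the sum already reaches $c$ and all shares are accounted for within $c$ groups. Otherwise more than $c/2$ terms satisfy $\bar{x}^{(i+1)}<\bar{x}^{(i)}/n^{2/c}$; since shares are non-increasing across the remaining indices as well, after the first $c/2$ such drops the leading share has fallen below $\bar{x}^{(1)}/(n^{2/c})^{c/2}\le 1/n$. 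At that point the group size $\lceil 2n^{2/c}/\bar{x}^{(j)}\rceil$ exceeds $n$, so that group absorbs every remaining task and the pass terminates. The missing idea is therefore not an SPD-specific bound on $\bar{x}_{i_K}$ but the elementary observation that once a leading share dips below $1/n$ the current group is necessarily the last one. (Your AM--GM route can in fact be rescued with this same observation---if $K>c$ then AM--GM forces $\bar{x}_{i_{c+1}}<1/n$, which contradicts the existence of group $c+1$---but not via any structural lower bound on shares.)
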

\begin{proof}
The allocation strategy of $\mathcal{A}^{disc}$ ensures that $n(r(v)) \le \lceil\frac{2n^{2/c}}{x(v)}\rceil \le \frac{2n^{2/c}}{x(v)}+1$ for all $v$ that are allocated \emph{first} to a resource $r(v)$. Since $x(v) \ge x(w)$ for any other task $w$ for which $r(v) = r(w)$, the inequality generally holds for all tasks. 
Therefore, we have for all $v \in V$ that $d(v) \in \BigO(n^{\BigO(1/c)}\delta(v))$, which implies that
$d(G) \in \BigO(\hat{d}(G)) \in \BigO\left(n^{\BigO(1/c)}\delta(G)\right)$
for any computationally constrained SPD graph $G$.
It remains to show that all tasks are allocated to resources. Let
$\rho(R_i) := \sum_{v \in V : r(v) = R_i} x(v).$
It suffices to show that $\sum_{i=1}^c \rho(R_i) \ge c$.
We define $\bar{x}^{(i)}$ as the largest capacity assigned to resource $R_i$, i.e., $\bar{x}^{(1)} := \bar{x}_1$ and  $\bar{x}^{(2)} := \bar{x}_j$, where $j = \left\lceil\frac{2n^{2/c}}{\bar{x}_1}\right\rceil+1$ and so on. 
Further let $\bar{x}^{(c+1)}:= \bar{x}_n$. 
Since the shares are ordered, we know that for all $i \in \{1,\ldots,c\}$, it holds that
\begin{IEEEeqnarray}{lCr}
\rho(R_i) &\ge & \bar{x}^{(i)} + \left(\left\lceil\frac{2 n^{2/c}}{\bar{x}^{(i)}}\right\rceil-1\right)\bar{x}^{(i+1)} \nonumber \\
&\ge& \bar{x}^{(i)} + \left(\frac{2 n^{2/c}}{\bar{x}^{(i)}}-1\right)\bar{x}^{(i+1)}.\label{eq:rho_i}
\end{IEEEeqnarray}
For the sum of all $\rho(R_i)$ we get that
$$\sum_{i=1}^c \rho(R_i) \stackrel{\eqref{eq:rho_i}}{\ge} \bar{x}^{(1)} - \bar{x}_n +\sum_{i=1}^c 2n^{2/c}\frac{\bar{x}^{(i+1)}}{\bar{x}^{(i)}}
\ge \sum_{i=1}^c 2n^{2/c}\frac{\bar{x}^{(i+1)}}{\bar{x}^{(i)}}.$$
If at least half of the terms is at least $2$, then the total sum is at least $c$, which means that all tasks can be allocated to the resources. For the sake of contradiction, assume that more than half of the terms are smaller than $2$. For each such term $2n^{2/c}\frac{\bar{x}^{(i+1)}}{\bar{x}^{(i)}} < 2$, it holds that
\begin{equation}
\bar{x}^{(i+1)} < \frac{\bar{x}^{(i)}}{n^{2/c}}.\label{eq:decrease}
\end{equation}
After the first $c/2$ cases where Inequality~\eqref{eq:decrease} holds, we get for the corresponding index $j$ that
\begin{displaymath}
\bar{x}^{(j)} \stackrel{\eqref{eq:decrease}}{<} \frac{\bar{x}^{(1)}}{\left(n^{2/c}\right)^{c/2}} \le \frac{1}{n}.
\end{displaymath}
Thus, all subsequent shares are so small that the respective tasks can be allocated to a single resource, and the processing cost for those tasks is not larger than in the continuous case, implying that more than half of the terms cannot satisfy Inequality~\eqref{eq:decrease}, which concludes the proof.
\end{proof}

Note that we used the constant $2$ twice in Line~12 of Algorithm~\ref{algo:discrete} for ease of exposition. It is straightforward to compute optimal constants for a given $n$ and $c$.


\subsection{Non-Computationally Constrained Problems} \label{app:greedy}
As shown, 
the streaming cost for task allocation remains bounded with algorithm $\mathcal{A}^{disc}$ if the streaming graph is computationally constrained. If streaming graphs are not computationally constrained, the question arises how to deal with (arbitrarily) large transfer costs.
Specifically, we discuss whether we can build upon the techniques used by $\mathcal{A}^{disc}$: As we took a greedy approach to derive a discrete solution from the continuous solution for computationally constrained streaming graphs, we look into the difficulty of applying greedy strategies to handle large transfer costs.

The blueprint of our greedy strategies is the following: After computing the optimal continuous solution,
the algorithm traverses all edges from heaviest to lightest. For an edge $e = (u,v)$, it adds a constraint that enforces tasks $u$ and $v$ to be collocated, i.e., $r(u) = r(v)$; then, Algorithm $\mathcal{A}^{disc}$ is executed with the newly added constraint to discover a new allocation $r$. Depending on the quality of $r$, the constraint is retained or dismissed. After the traversal, the allocation computed by $\mathcal{A}^{disc}$ under the retained constraints is the solution.
The considered greedy strategies differ in the retention policy for constraints.
We now show that multiple intuitive strategies fail to achieve a better approximation ratio than the trivial $\BigO(n)$ bound.

Arguably the most straightforward strategy is to retain a constraint if adding it results in a reduced streaming cost $d(G)$.
This strategy may already fail if there are two paths $P_1$ and $P_2$ for which $d(P_1) = d(P_2) = d(G)$ and there is an edge with an arbitrarily large transfer cost on each path: if both large edges are not covered, i.e., adjacent tasks are not collocated, then each individual constraint may reduce the cost of the respective path, but not $d(G)$; applying both constraints together, however, would result in a reduction of $d(G)$. As the reduction is not bounded, neither is the approximation ratio.

The deficiency of the strategy above can be overcome by slightly changing the rule to always retain a constraint \emph{unless} it increases $d(G)$. However, also this strategy $\mathcal{S}$ fails in that the approximation ratio may grow linearly with the number of tasks, even for a large number of resources.

\begin{figure}[t]
\center
 \includegraphics[width=.85\columnwidth]{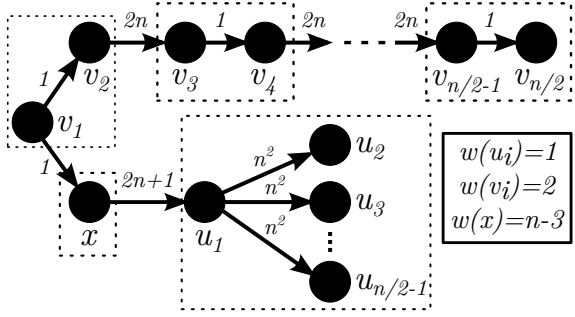}
 \caption{SPD graph where the approximation ratio of the greedy strategy $\mathcal{S}$ is in $\Omega(n)$.}
 \label{fig:greedy-worst-case}
\end{figure}

\begin{theorem}
The approximation ratio of the streaming cost of strategy $\mathcal{S}$ is in $\Omega(n)$ even for $c \in \Theta(n)$.
\end{theorem}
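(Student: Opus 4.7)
The plan is to exhibit an SPD graph $G$ with $n$ nodes and $c := n$ resources on which $\mathcal{S}$'s streaming cost exceeds the optimum by a factor $\Omega(n)$. Take $G := h \scomp (v_1 \pcomp \cdots \pcomp v_{n-1})$---a head node $h$ in series with $n-1$ parallel leaves, so the only edges are $(h, v_i)$ for $i = 1, \ldots, n-1$. Every node has weight $1$ and every edge has weight $b := n^2$. Bounding the optimum is immediate: any mapping either collocates all $n$ tasks on a single resource, producing $d(G) = n + 0 + n = 2n$, or else leaves some $v_i$ with $r(v_i) \ne r(h)$, in which case the path $h \to v_i$ pays $d(G) \ge 1 + n^2 + 1 > 2n$. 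Hence $\text{OPT}(G) = 2n$.

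Next, I compute the baseline allocation that $\mathcal{A}^{disc}$ produces before any constraint is added. By Lemma~\ref{lemma:continuous} applied at the series root, $\mathcal{A}^{cont}$ yields $x(h) = n/(1 + \sqrt{n-1}) = \Theta(\sqrt{n})$ and $x(v_i) \approx 1$. The rounding loop in Lines~1--8 of $\mathcal{A}^{disc}$ caps $x(h)$ at $1$ and the subsequent continuous pass leaves every leaf with share $1$. Then Line~12 uses the constant bin size $s := \lceil 2 n^{2/c} \rceil$ and packs the $n$ tasks into groups of $s$, so $h$ shares its resource with exactly $s - 1$ leaves. Every remaining path $h \to v_j$ pays the full transfer $n^2$ plus $2s$ in processing, so the baseline cost is $d(G) = n^2 + \Theta(1)$.

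The crux is the greedy step. All edges share the weight $b$, so the processing order is immaterial. An edge $(h, v_j)$ with $v_j$ already collocated with $h$ introduces a trivially satisfied constraint, leaves $d(G)$ unchanged, and is harmlessly retained. For $v_j \notin r(h)$'s group, fusing $h$, $v_j$, and any previously retained partners of $h$ into a super-leaf of the SPD tree and re-running $\mathcal{A}^{disc}$ yields an allocation that still packs roughly $s$ entities per resource, so $n(r(h))$ grows by at least one. Meanwhile $\Theta(n)$ of the remaining leaves still live on other resources. The dominant paths $h \to v_k$ with $v_k \notin r(h)$ thus pay strictly more processing on $h$ on top of the unchanged $n^2$ transfer, so $d(G)$ strictly increases and $\mathcal{S}$ dismisses the constraint. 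Inductively, no non-trivial constraint is ever retained, and $\mathcal{S}$ terminates with $d(G) = \Omega(n^2)$, giving a ratio of $\Omega(n^2)/(2n) = \Omega(n)$ with $c = n \in \Theta(n)$.

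The main obstacle is robustness to the convention for how $\mathcal{A}^{disc}$ handles an added collocation constraint, since Section~\ref{sec:algorithm} only defines $\mathcal{A}^{disc}$ without constraints. I plan to fix the convention to be the natural one: fuse each constraint-clique into a single SPD super-leaf whose weight equals the sum of its members' weights, then run the unchanged pipeline on the shrunken tree. Under that convention, the key observation---that Line~12 packs $\Theta(1)$ entities per resource regardless of whether those entities are leaves or super-leaves---makes $n(r(h))$ monotone in the clique size and drives the induction above. Verifying that observation amounts to a routine recomputation of $\mathcal{A}^{cont}$ on the shrunken tree after each rounding iteration, showing that all surviving shares end up equal to $1$ and the constant bin size $s$ is recovered.
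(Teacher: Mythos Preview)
Your construction and the paper's are quite different, and the failure mechanism you try to trigger is essentially the opposite one. The paper builds a two-branch graph in which the $v_i$-path is the bottleneck and the heavy edges sit on the \emph{other} branch through $x$ and the $u_i$; collocating along that branch never touches the bottleneck, so $d(G)$ does not increase and $\mathcal{S}$ retains those constraints, thereby locking itself into an allocation from which the $\Theta(n)$ optimum on the $v_i$-path is no longer reachable. Your star, by contrast, tries to make $\mathcal{S}$ \emph{dismiss} every constraint: each single fusion $(h,v_j)$ is supposed to strictly raise $d(G)$, so none is kept, even though applying all of them simultaneously would collapse the cost to $2n$.

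The idea is plausible, and the star example probably does work, but the argument as written has real gaps. First, the claim that an edge $(h,v_j)$ with $v_j$ already in $h$'s bin is ``trivially satisfied'' and ``leaves $d(G)$ unchanged'' contradicts your own convention: you stipulate that every added constraint fuses a clique into a super-leaf and re-runs the full $\mathcal{A}^{disc}$ pipeline, so the resulting allocation need not agree with the previous one. Second, and more importantly, the assertion that $n(r(h))$ grows by at least one after each fusion hinges on where the super-leaf lands in the packing of Line~12. After rounding, all shares equal $1$, so the sort order is pure tie-breaking; for suitable $n$ and tie-breaks the super-leaf can fall into a short last bin and $n(r(h))$ actually drops, making $d(G)$ decrease and causing $\mathcal{S}$ to \emph{retain} that constraint. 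What you really need---and what is missing---is the argument that after at most a bounded number of retained constraints the super-leaf exceeds the bin size, so any further fusion forces $n(r(h))$ strictly up and $\mathcal{S}$ dismisses from then on, leaving $\Theta(n)$ leaves uncollocated and $d(G)\ge n^2$. The paper's two-branch construction avoids this fragility entirely: because the retained constraints lie off the bottleneck path, ``$d(G)$ does not increase'' holds for structural reasons, independent of the packing minutiae you correctly flag as the main obstacle.
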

\begin{proof}
Consider the streaming graph depicted in Figure~\ref{fig:greedy-worst-case}. Let the number of resources be $c := \frac{n}{4}+2$.
Node and edge weights are as stated in the figure. The dashed boxes illustrate the optimal task allocation if transfer costs are disregarded: It holds that $\hat{d}(\{v_1,v_2,\ldots,v_{n/2}\}) = 2n$ and $\hat{d}(\{v_1,x,u_1,u_i\}) = 2n-1$ for all $i = 2,\ldots,\frac{n}{2}-1$, and therefore $\hat{d}(G) = 2n$.
If we consider transfer costs as well, strategy $\mathcal{S}$ will ensure that all tasks $u_i$ are collocated.
At this stage, it holds for all $ i = 2,\ldots,\frac{n}{2}-1$ that
\begin{IEEEeqnarray*}{rCl}
d(\{v_1,v_2,\ldots,v_{n/2}\}) = 2n + 2n\cdot(\frac{n}{4}-1) &=& \frac{n^2}{2} \text{,~~~~and} \\ 
d(\{v_1,x,u_1,u_i\}) = 2n - 1 + (2n+2) &=& 4n+1.
\end{IEEEeqnarray*}
Thus, it holds that $d(G) = \frac{n^2}{2}$.
Strategy $\mathcal{S}$ will retain the constraint for the next heaviest edge $(x,u_1)$ because the new cost on path $P = \{v_1,x,u_1,u_i\}$ is then
\begin{displaymath}
 d(P) = 5 + \frac{n^2}{2} - \frac{n}{2} \le \frac{n^2}{2}
\end{displaymath}
for all $i = 2,\ldots,\frac{n}{2}-1$ and $n\ge 10$. As $\mathcal{S}$ will keep these constraints, it holds that $d(G) \in \Omega(n^2)$ also at termination.
However, the optimal solution does not collocate tasks $x$ and $u_1$, instead it reassigns all pairs on the path $\{v_1,v_2,\ldots,v_{n/2}\}$ so that the endpoints of edges with transfer costs of $2n$ are collocated, resulting in a streaming cost of $$d(\{v_1,v_2,\ldots,v_{n/2}\}) = 2n + \frac{n}{4} \in \BigO(n),$$ which proves the claimed bound.
\end{proof}

Various other strategies, such as 
keeping the constraint for each edge $e$ as long as the streaming cost of any path including $e$ does not increase, also fail to achieve a better bound.
While these results do not preclude the existence of a simple strategy that guarantees a good approximation ratio, they suggest that a different approach may be required to cope with large edge weights in the general case.


\section{Related Work}
\label{sec:relatedwork}

There is a large body of work on (distributed) stream processing, covering a broad variety of topics.
The architectural challenges concerning \emph{scalability}, \emph{load management}, \emph{high availability}, and \emph{federated operation} have been addressed (see, e.g., \cite{cherniack03}), as well as the requirements and algorithmic challenges in stream processing in general (e.g., \cite{babcock02}).
As a result, several general-purpose stream processing platforms have been proposed~\cite{arasu03,chandrasekaran03,abadi05,abadi03,carney02},  enabling the processing of continuous streams from many sources by providing primitive operators, which are building blocks in the form of functions, to build up complex stream processing topologies.

Apart from scalability and availability, \emph{adaptive control} is a key requirement to absorb variable and bursty workloads~\cite{amini06}.
Xing et al.\ propose a correlation-based algorithm that strives to minimize both average load and load variance on the resources to protect against bursty inputs~\cite{xing05}. The basic idea is to measure the correlation coefficient over time and to collocate tasks with small coefficients. Unlike our definition of processing cost, \emph{load} is defined as the sum of the costs of tasks, where the cost of a task is the arrival rate multiplied by the processing time. While the load distribution problem is NP-hard, it is shown that a greedy algorithm yields a fairly good distribution in practice. It is further illustrated that the streaming model is fundamentally different from other parallel processing models. 
\emph{Load shedding} is another approach to coping with excessive load. Tatbul et al.\ model the distributed load shedding problem as a linear optimization problem subject to preserving low-latency processing and minimizing quality degradation~\cite{tatbul07}.
By contrast, our work considers only static allocations at invariable input rates. However, since our algorithm is efficient, it can cope with changing environments through periodic re-execution. Similarly, Chatzistergiou et al.~\cite{chatzistergiou14} suggest to use greedy task allocation for fast reallocation in dynamic environments. They experimentally evaluate their algorithm, which is tailored to problems consisting of groups of similar-weight tasks.   
Although their problem definition includes arbitrary DAGs, the studied streaming networks are typically SPD graphs with restricted weights. Their model is different in that processing cost of a task is independent of the allocation.

Another key goal is to make stream processing \emph{easily accessible}. To this end, platforms have been built that support efficient development of applications for processing continuous unbounded streams of data by exposing a set of simple programming interfaces. Examples of such platforms are S4~\cite{neumeyer10} and Storm\footnote{See http://storm.apache.org/.}. Both enable a programmatic specification of a level of parallelism for a processing element (PE) prototype, which determines the number of parallel instances to be executed---a feature that directly corresponds to our definition of parallel composition.
System S takes this approach one step further in that it provides its own language (SPADE) for the composition of parallel data-flow graphs~\cite{gedik08}. In the context of System S, it has also been studied how to coalesce basic operators into PEs and how to distribute them onto available hosts~\cite{khandekar09}. Their work differs from ours as the PEs are given in our model and we consider different cost functions.
Additionally, the automatic composition of System S workflows, i.e., streaming topologies, has also been investigated~\cite{riabov06}.

While there is no theoretical work on a task allocation model similar to ours, the problem of allocating resources and placing operators in stream processing has received much attention. Wolf et al.\ propose a scheduler that shifts the allocation dynamically in the face of changes in resource availability and job arrival and departures in order to optimize the weighted average of the allocation quality~\cite{wolf08}. A key difference to our model is that \emph{fractional assignments} of tasks to resources are possible in their model.
Xing et al.\ have introduced the notion of a \emph{resilient operator (i.e., task) placement plan}, which is resilient in the sense that it can withstand the largest set of input rate combinations~\cite{xing06}. An algorithm for their model---where load functions can be expressed as linear constraint sets---is shown to improve resilience experimentally. Another approach to operator placement makes use of a layer between the stream processing system and the physical network that determines the placement in a virtual cost space and then maps the cost-space coordinates to physical resources~\cite{pietzuch06}.
Finally, Mattheis et al.\ adapt \emph{work stealing} strategies for stream processing and give bounds on the maximum latency for certain stealing strategies~\cite{mattheis12}.

Task allocation for stream processing is also related to \emph{precedence-constrained scheduling}, 
where a set of jobs with precedence constraints has to be scheduled on $m$ processors so as to minimize the makespan or the average job completion time. These problems, which are generally NP-hard, have been extensively studied already in the 1970s (see~\cite{hartmann10} and references therein).
The key difference to stream processing is that each job is executed only once; thus, as soon as a job is completed, no more resources need to be invested in this job. In stream processing, an allocated task is continuously executed on this resource (in parallel to collocated tasks).

There has been considerable interest in \emph{series-parallel graphs} for many years due to their versatility and the fact that many NP-hard problems are solvable in polynomial time on these graphs. Examples for such problems are the minimum vertex cover problem, the maximum matching problem, and the maximum disjoint triangle problem~\cite{takamizawa82}. 
Most related to our work is the topic of scheduling jobs subject to precedence constraints in the form of a series-parallel graph. It has been shown how to minimize the makespan for deteriorating jobs, for which the processing time increases with the start delay, in polynomial time~\cite{wang08}.
The work most similar to ours studies scheduling of task graphs on two identical processors, where tasks have unit execution times and unit communication delays, and communication is free between collocated tasks. While this problem is NP-hard for general graphs~\cite{rayward87}, Finta et al.\ present an algorithm that computes an optimal schedule for a class of series-parallel graphs in quadratic time~\cite{finta96}.
Despite the similarities, their model is quite different in that there are no precedence relations between the tasks in our model as all allocated tasks must be executed in parallel, which necessitates a completely different approach.

\section{Conclusion}
\label{sec:conclusion}
We have introduced a theoretical model and a task allocation problem where tasks of a streaming topology must be allocated to a fixed set of physical resources for continuous processing.
As the problem is NP-hard, we have focused on approximation algorithms and presented an algorithm whose cost is only a small factor larger than in the optimal case under certain assumptions. While the algorithm solves the problem for an important case, there is much left to explore. In particular, the resources may not have uniform capacities and bandwidths, which makes it harder to find an optimal allocation. An interesting open question is also what guarantees on the approximation quality can be given in polynomial time for arbitrary directed acyclic streaming topologies.


\balance
\bibliographystyle{IEEEtran}
\bibliography{distributedStreamProcessing}



\end{document}